\newtheorem{theorem}{Theorem}
\newtheorem{lemma}{Lemma}
\newtheorem{corollary}{Corollary}
\newtheorem{prop}{\textbf{Proposition}}
\newtheorem{definition}{\textbf{Definition}}
\renewcommand{\vec}[1]{{\boldsymbol{#1}}}
\begin{document}

\title{Fundamental Limits of Coded Caching: From Uncoded Prefetching to Coded Prefetching}
\author{Kai Zhang and Chao Tian}
\maketitle

\begin{abstract}
In order to characterize the fundamental limit of the tradeoff between the amount of cache memory and the delivery transmission rate of multiuser caching systems, various coding schemes have been proposed in the literature. These schemes can largely be categorized into two classes, namely uncoded prefetching schemes and coded prefetching schemes. While uncoded prefetching schemes in general offer order-wise optimal performance, coded prefetching schemes often have better performance at the low cache memory regime. The significant differences in the coding components between the two classes may leave the impression that they are largely unrelated. In this work, we provide a connection between the uncoded prefetching scheme proposed by Maddah Ali and Niesen (and its improved version by Yu {\em et al.}) and the coded prefetching scheme proposed by Tian and Chen. A critical observation is first given where a coding component in the Tian-Chen scheme can be replaced by a binary code, which enables us to view the two schemes as the extremes of a more general scheme.  An explicit example is given to show that the intermediate operating points of this general scheme can in fact provide new memory-rate tradeoff points previously not known to be achievable in the literature. This new general coding scheme is then presented and analyzed rigorously, which yields a new inner bound to the memory-rate tradeoff for the caching problem. 
\end{abstract}

\section{Introduction}

Caching can be used to relieve contention on communication resources by prefetching data to a local or fast memory space, and thus avoiding data retrieval from the remote or slower data source during peak traffic time. Traditionally, caching has mainly been considered in single user settings, {\em e.g.,} on-CPU caches vs. RAM in computers, where the hit-ratio is the key measure of performance. As networked systems become more prevalent, caching systems involving multiple users have attracted increasingly more research attention. 

\begin{figure}[tb]
\centering
\includegraphics[width=8.5cm]{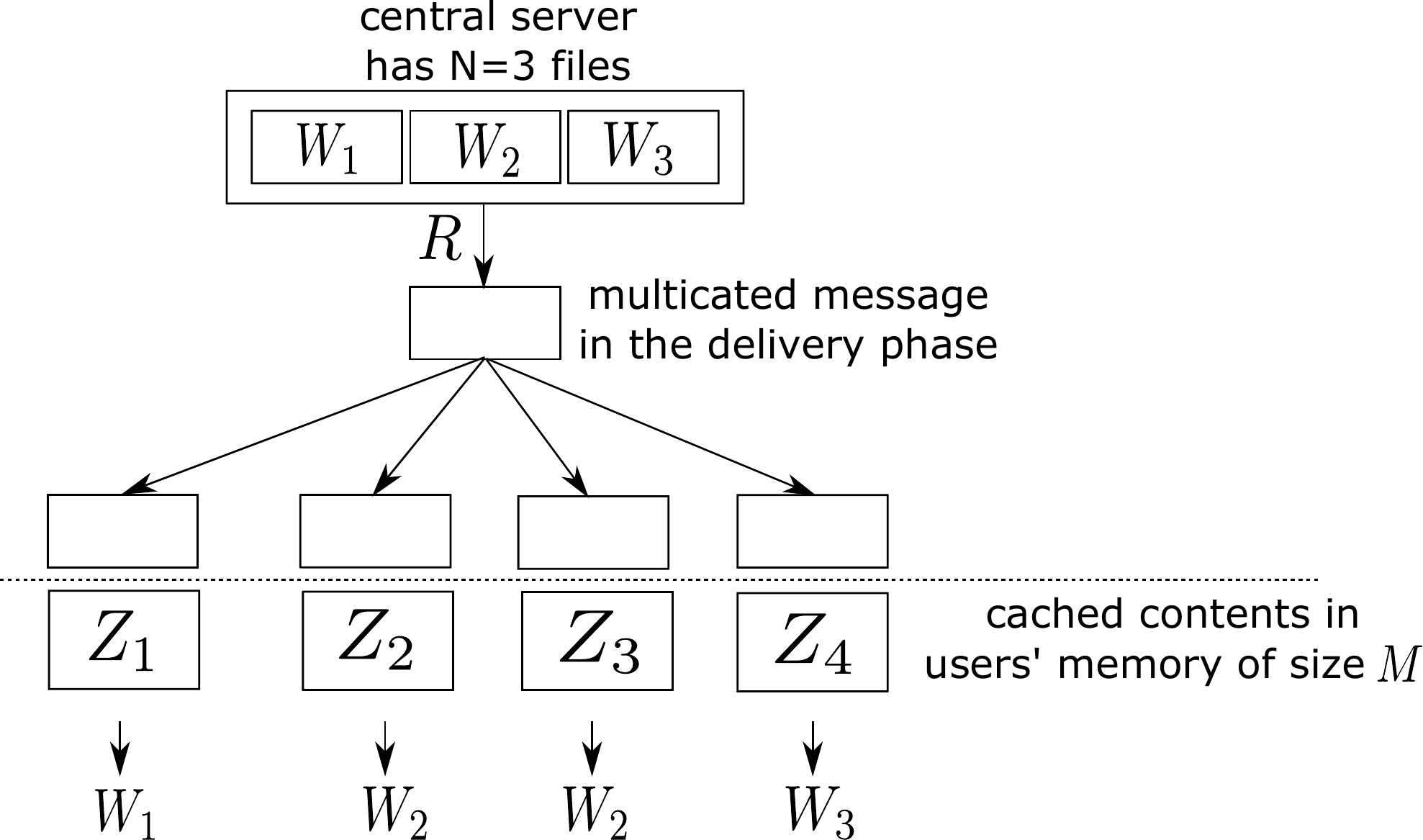}
\caption{\label{fig:system}An example caching system instance, where there are $N=3$ files, denoted as $(W_1,W_2,W_3)$, and $K=4$ users, whose cached contents are $(Z_1,Z_2,Z_3,Z_4)$, respectively. In this instance the users request files $(W_1,W_2,W_2,W_3)$, respectively.}
\end{figure}

In their award-winning article \cite{MaddahAliNiesen:14}, Maddah-Ali and Niesen provided a formal information theoretic formulation for the caching problem in multiuser settings; see Fig. \ref{fig:system}. In this formulation, there are $N$ files, each of $F$ bits, and $K$ users. Each user has a local cache memory of capacity $MF$ (thus a normalized capacity of $M$). In the prefetching  (or sometimes referred to as the placement) phase, the users can fill their caches with contents from the central server without the knowledge of the precise requests at the deliver phase. In the delivery phase, each user reveals the request for a single file from the central server, and the central server must multicast certain common (and possibly coded) information to all the users in order to accommodate these requests. Since in the prefetching phase, the requests at the later phase are unknown a-prior, the cached contents must be strategically prepared at all the users. The goal is to minimize the amount of multicast information which has rate $RF$ (or equivalently the normalized rate of {$R$)}, under the constraint on the normalized cache memory $M$. There is a natural tradeoff between the amount of cache memory and the delivery transmission rate, which is often referred to simply as the memory-rate tradeoff or the $(M,R)$ tradeoff. It was shown in \cite{MaddahAliNiesen:14} that in terms of this memory-rate tradeoff, coding can be rather beneficial, while solutions based on uncoded prefetching and delivery will suffer a significant loss. Subsequent works extended it also to decentralized caching placements \cite{MaddahAliNiesen:14Networking}, caching with nonuniform demands \cite{niesen2017coded}, online caching placements \cite{pedarsani2016online}, and hierarchical coded caching \cite{karamchandani2016hierarchical}, and many others.

There were quite a few recent efforts \cite{chen2016fundamental,Sahraei:15,Amiri:16,Wan:16, TianChen:16Arxiv,Yu:16,Amiri:17,gomez2016fundamental} aiming to find better codes with improved memory-rate tradeoff, toward the eventual goal of finding the optimal codes and thus completely characterizing the fundamental limit of this tradeoff. In particular, Yu {\em et al.} \cite{Yu:16} proposed a strategy that is optimal when prefetching is restricted to be uncoded, which in fact directly improves on the scheme in \cite{MaddahAliNiesen:14}. The key insight in \cite{Yu:16} appears to be that the original delivery strategy in \cite{MaddahAliNiesen:14} may have redundancy in the transmissions, which can be systematically removed to reduce the delivery rate in some cases. In another recent work, Tian and Chen \cite{TianChen:16Arxiv} proposed a coded prefetching and the corresponding delivery strategy, which relies on a combination of rank metric codes and maximum distance separable (MDS) codes in a non-binary finite field. In the regime when the memory size $M$ is relatively small, the scheme in \cite{TianChen:16Arxiv} can achieve a better performance than that in \cite{Yu:16}. Another code construction using coded prefetching was proposed more recently by G{\'o}mez-Vilardeb{\'o} in \cite{gomez2016fundamental}, which can provide further improvement, over the schemes in \cite{Yu:16} and \cite{TianChen:16Arxiv}, in the low memory regime for a specific range of $(N,K)$. The characterization of the fundamental limit of the memory-rate tradeoff however remains open, which appears to require both improved coding schemes and  stronger outer bounding techniques \cite{Tian:16Computer,ghasemi2017improved,wang2017improved,yu2017characterizing}.

In this work, we show that the scheme in \cite{TianChen:16Arxiv} can be slightly modified, where the MDS code used in the delivery phase can be replaced by a code using only binary additions (XOR). Though the alternative perspective itself does not provide further improvement on the known memory-rate tradeoff, it allows us to make a conceptual connection between the scheme in \cite{TianChen:16Arxiv} and that in \cite{Yu:16}. It further enables us to view these two schemes as the extremes of a more general scheme. The intermediate operating points of this more general scheme can indeed provide new tradeoff points previously not known in the literature, which we demonstrate using an explicit example for $(N,K)=(3,4)$. Extending this example, a general explicit code construction is then rigorously presented and analyzed, which provides a new inner bound to the fundamental limit of the memory-rate tradeoff region in the caching problem. The inner bound does not have a simple closed form expression, but can be represented as a linear program to facilitate its computation. 

The rest of the paper is organized as follows. Section \ref{sec:pre} first provides some necessary background on existing results and rank metric codes, and then introduces the notion of transmission type. A critical observation is given in Section \ref{sec:critical} which connects the two classes of schemes as extreme cases, and then a new memory-rate pair previous unknown in the literature is produced by considering the intermediate cases for $(N,K)=(3,4)$. 
The new inner bound is given formally in Section \ref{sec:alternative}, and the corresponding coding scheme, its analysis, the proof of the correctness are given in Section \ref{sec:newcode}.  Section \ref{sec:conclusion} concludes the paper with a few remarks. A technical proof is relegated to the appendix. 

\section{Relevant Results and Preliminaries}
\label{sec:pre}

In this section, we first briefly review existing results on the coded caching problem, and then provide necessary background on rank metric codes, which serve an instrumental role in the new code construction. A new concept important in our code construction, {\em i.e., } the transmission type, is then introduced. 

\subsection{Existing Schemes Using Uncoded Prefetching}
\label{subsec:codeexample1}

The scheme in \cite{MaddahAliNiesen:14}, which uses uncoded prefetching, can achieve the following memory-rate pairs
\begin{align}
(M,R)=\left(\frac{tN}{K},\frac{K-t}{1+t}\right),\quad t=0,1,\ldots,K,\label{eqn:MaddahAliNiesen:14}
\end{align}
and since another trivial point is clearly $(M,R)=(0,N)$, the lower convex hull of them provides an upper bound to the optimal tradeoff, as stated in \cite{MaddahAliNiesen:14}. 
More recently, Yu {\em et al.} \cite{Yu:16} gave a scheme which achieves the memory-rate tradeoff points of 
\begin{align}
(M,R)=\left(\frac{tN}{K},\frac{{K \choose t+1}-{{K-\min\{K,N\}} \choose t+1}}{{K \choose t}}\right),\quad t=0,1,\ldots,K.\label{eqn:Yu}
\end{align}
These points strictly improve the rate component $R$ in (\ref{eqn:MaddahAliNiesen:14}) when $K-N\geq t+1$. Both schemes in \cite{MaddahAliNiesen:14} and \cite{Yu:16} use the same uncoded prefetching strategy, but the delivery strategy in \cite{Yu:16} is a direct improvement to that in \cite{MaddahAliNiesen:14}. It was shown in \cite{Yu:16} that in the restricted class of schemes where only uncoded prefetching is allowed, the tradeoff provided in (\ref{eqn:Yu}) is in fact optimal. These two coding schemes can roughly be understood as follows.

Choose a fixed integer $t$, where $0\leq t\leq K$, and partition each file into ${K \choose t}$ segments of equal size; each segment is thus uniquely associated with a cardinality-$t$ subset $\mathcal{S}$ of the full user set $\{1,2,\ldots,K\}$, and this segment is placed in the caches of users in $\mathcal{S}$ during the prefetching phase. During the delivery phase, consider each $(t+1)$ subset $\mathcal{B}$ of users: within this group, each user is requesting a segment that is in all the other users' caches, and the server thus sends the XOR of all such segments of this group. Each user in this group can recover their respectively desired segment, since all other segments involved in this transmission are known to this user. As mentioned earlier, these transmissions as a whole, taken over all the possible choices of $\mathcal{B}$, may in fact have redundancy among themselves ({\em i.e.}, they are linearly dependent)  for certain $(N,K)$ parameters, and eliminating such redundancy results in the scheme in \cite{Yu:16}. 

Let us examine an example with $N=3$ files, denoted as $A,B,C$, respectively, and $K=4$ users. Set the auxiliary variable $t=2$, then each file is partitioned into ${4 \choose 2}=6$ segments, for example, file $A$ has segments  $A_{1,2},A_{1,3},A_{1,4},A_{2,3},A_{2,4},A_{3,4}$, and the segment $A_{1,2}$ is given to users 1 and 2, etc.. 
Suppose now the users' requests are $(A,A,B,C)$, {\em i.e.,} the first two users request file $A$, the third user requests file $B$, and the fourth user requests file $C$.  Consider the set of users $\mathcal{B}=\{1,2,3\}$, associated with which we should transmit $A_{2,3}+ A_{1,3}+ B_{1,2}$ according to the coding scheme discussed above, where the addition is in the binary field. Clearly the three users in $\mathcal{B}$ can recover their individually desired segments, {\em i.e.,} $A_{2,3},A_{1,3},B_{1,2}$, respectively. For other subsets of users, the transmissions are formed similarly, and the complete set of delivery transmissions is
\begin{align}
A_{2,3}+A_{1,3} + B_{1,2},A_{2,4}+ A_{1,4} + C_{1,2}, A_{3,4}+ B_{1,4}+ C_{1,3},A_{3,4}+ B_{2,4} + C_{2,3}.\label{eqn:34uncoded}
\end{align}
 In this particular case, the transmissions of (\ref{eqn:34uncoded}) do not have any redundancy.

The schemes in both \cite{Amiri:16} and \cite{Wan:16} use uncoded prefetching, and since the scheme in \cite{Yu:16} is optimal for this class of codes, the two schemes in  \cite{Amiri:16} and \cite{Wan:16} do not provide any additional improvement over  (\ref{eqn:Yu}).

\subsection{Existing Schemes Using Coded Prefetching}
\label{subsec:codedexample2}

Even in the pioneering work \cite{MaddahAliNiesen:14}, it was observed that uncoded  prefetching schemes are not sufficient to characterize the fundamental limit of the memory-rate tradeoff, and one code example using coded prefetching was given for the case $(N,K)=(2,2)$ as an illustration. In \cite{chen2016fundamental}, Chen {\em et al.} extended this example to the general case $N\leq K$, and showed that the single memory-rate pair $\left(\frac{1}{K},\frac{N(K-1)}{K}\right)$ is achievable and in fact optimal.

More recently, Tian and Chen \cite{TianChen:16Arxiv} proposed a more general scheme with coded prefetching for $N\leq K$. It was shown that the scheme can achieve the memory-rate tradeoff pairs
\begin{align}
(M,R)=\left(\frac{t[(N-1)t+K-N]}{K(K-1)},\frac{N(K-t)}{K}\right), \quad t=0,1,\ldots,K.\label{eqn:TianChen}
\end{align}
With $t=1$, it produces exactly the memory-rate pair given in \cite{chen2016fundamental}. 

The general scheme in \cite{TianChen:16Arxiv} is somewhat involved, but the digest is as follows. Each file is again partitioned into ${K \choose t}$ segments of equal size, and given to the relevant users as in \cite{Yu:16}; however, instead of directly storing them, each user caches certain linear combinations of these corresponding segments, mixed across all the files. During delivery, each symbol being transmitted is a linear combination of the segments from a single file, that serves two roles: firstly, the segments forming a single linear combination being transmitted are all present at certain user's cache that is not requesting this file, thus this user can use it to help resolve the cached symbols when sufficient such transmissions are collected; secondly, these segments are not present at some users which are requesting that file, thus can also help them to recover the missing segments. In order to guarantee the decodability, the cached contents and the transmitted contents should be made linearly independent, and for this purpose, rank metric codes can be utilized to produce the cached linear combinations, and MDS codes can be used to produce the delivery transmissions. 

Let us consider again the example  $(N,K)=(3,4)$ and $t=2$. In this case, the linear combinations of the segments 
\begin{align}
A_{1,2},A_{1,3},A_{1,4},B_{1,2},B_{1,3},B_{1,4},C_{1,2},C_{1,3},C_{1,4}\label{eqn:34_user1symbols}
\end{align}
are placed at user 1's cache, where each segment is viewed as a symbol in a large finite field. According to the scheme in \cite{TianChen:16Arxiv}, there should be a total of 5 linear combinations cached; the coefficients of these linear combinations are not critical in this construction, for which either deterministic rank metric codes can be used, or random assignments can be used with a high probability of being a valid choice in a sufficiently large finite field (which implies the existence of a deterministic assignment). Now consider again the requests $(A,A,B,C)$. In this case, the server will send the following 9 symbols
\begin{gather}
A_{3,4},B_{1,2},B_{1,4},B_{2,4},C_{1,2},C_{1,3},C_{2,3},A_{1,3}+A_{2,3},A_{1,4}+A_{2,4},\label{eqn:34coded}
\end{gather}
where the addition is in the same finite field of the information symbol which is usually not binary. The last two linear combinations can also be viewed as the parity symbols of two MDS codes. Now user 1 collects from (\ref{eqn:34coded}) the symbols $B_{1,2},B_{1,4},C_{1,2},C_{1,3}$, which, together with 5 cached linear combinations, leads to a total of $9$ linear combinations of the basis in (\ref{eqn:34_user1symbols}). Since the linear combinations are designed to be linearly independent, all of the symbols can be resolved. User 1 then collects $A_{1,3}+A_{2,3},A_{1,4}+A_{2,4}$ from which $A_{2,3}$ and $A_{2,4}$ can be recovered by eliminating $A_{1,3}$ and $A_{1,4}$, since they have been resolved from the cached content. It can be verified in a similar manner that all other users can also recover the requested files, and for any other demand patterns, transmissions of 9 symbols will always suffice.  The memory-rate pair achieved by the scheme in \cite{MaddahAliNiesen:14} is $(M,R)=(\frac{3}{2},\frac{2}{3})$ while the scheme in \cite{TianChen:16Arxiv} gives $(M,R)=(\frac{5}{6},\frac{3}{2})$, which are illustrated in Fig. \ref{fig:34}.

Amiri and Gunduz \cite{Amiri:17} showed that the following tradeoff point is achievable when $N\leq K$
\begin{align}
(M,R)=\left(\frac{N-1}{K},\frac{N(2K-N)}{2K}\right),\label{eqn:gunduz}
\end{align}
using a coded prefetching scheme. However, it can be verified that the pair $(M,R)$ in (\ref{eqn:gunduz}) is precisely on the time-sharing line between (\ref{eqn:Yu}) and (\ref{eqn:TianChen}) with $t=1$. More recently, G{\'o}mez-Vilardeb{\'o} \cite{gomez2016fundamental} showed that the following memory-rate pairs are achievable:
\begin{align}
(M,R)=\left(\frac{N}{Kg},N-\frac{N(N+1)}{K(g+1)}\right),\quad g=1,...,N, \label{eqn:gomez}
\end{align}
which can offer further improvement when $N\leq K\leq (N^2+1)/2$. The lower convex hull of (\ref{eqn:Yu}), (\ref{eqn:TianChen}) and (\ref{eqn:gomez}) provides the best known upper bound to the fundamental limit of $(M,R)$ tradeoff known in the literature.

\subsection{Linearized Polynomial and Rank Metric Codes}
\label{subsec:pre}
Similar as in \cite{TianChen:16Arxiv}, rank metric codes based on linearized polynomials (see \cite{Gab85}) can be used to facilitate our code constructions. The following lemma is relevant in this regard; see, {\em e.g.,} \cite{LidNie}.

\begin{lemma} \label{lem:useful}
A linearized polynomial in finite field $\mathbb{F}_{q^m}$
\begin{align}
f(x)=\sum_{i=1}^{P}v_i x^{q^{i-1}}, \  v_i \in \mathbb{F}_{q^m}
\label{eqn:linearized}
\end{align}
can be uniquely identified from evaluations at any $P$ points $x=\theta_i\in \mathbb{F}_{q^m}$, $i=1,2,\ldots,P$, that are linearly independent
over $\mathbb{F}_q$.
\end{lemma}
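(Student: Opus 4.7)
The plan is to exploit the $\mathbb{F}_q$-linearity of the evaluation map $\alpha \mapsto f(\alpha)$ and convert the uniqueness claim into a statement about the number of zeros of a linearized polynomial. First I would note that since the Frobenius $x \mapsto x^q$ is additive and fixes $\mathbb{F}_q$, the polynomial $f(x) = \sum_{i=1}^P v_i x^{q^{i-1}}$ defines an $\mathbb{F}_q$-linear endomorphism of $\mathbb{F}_{q^m}$: $f(c\alpha + c'\beta) = c f(\alpha) + c' f(\beta)$ for all $c,c' \in \mathbb{F}_q$ and all $\alpha,\beta \in \mathbb{F}_{q^m}$.

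Next I would recast the claim in terms of the difference of two candidate polynomials. Suppose $f$ and $\tilde f$ both have the form in (\ref{eqn:linearized}) and satisfy $f(\theta_j) = \tilde f(\theta_j)$ for $j = 1,\ldots,P$. Set $g(x) = f(x) - \tilde f(x)$; then $g$ is again a linearized polynomial of the same form, and $g(\theta_j) = 0$ for every $j$. By the $\mathbb{F}_q$-linearity of the evaluation just observed, $g$ vanishes on the entire $\mathbb{F}_q$-subspace $V = \mathrm{span}_{\mathbb{F}_q}(\theta_1,\ldots,\theta_P) \subseteq \mathbb{F}_{q^m}$. Under the linear-independence hypothesis on the $\theta_j$, $\dim_{\mathbb{F}_q} V = P$, so $|V| = q^P$ and thus $g$ has at least $q^P$ distinct roots in $\mathbb{F}_{q^m}$.

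The final step is a straightforward degree bound. The highest power of $x$ appearing in $g$ is $x^{q^{P-1}}$, so as an ordinary polynomial $\deg g \le q^{P-1} < q^P$. A nonzero polynomial over a field cannot have more roots than its degree, so $g \equiv 0$ and hence $f = \tilde f$, proving uniqueness. Equivalently, the coefficient-to-evaluation map $(v_1,\ldots,v_P) \mapsto (f(\theta_1),\ldots,f(\theta_P))$ is an $\mathbb{F}_{q^m}$-linear map represented by the Moore-type matrix with $(j,i)$-entry $\theta_j^{q^{i-1}}$, and the argument above is exactly the assertion that this matrix has trivial kernel, hence is invertible, so $f$ is recoverable.

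I do not anticipate a serious obstacle here; the only subtle point is making sure the counting argument uses the $q$-power degree bound correctly and that one invokes $\mathbb{F}_q$-linearity (not $\mathbb{F}_{q^m}$-linearity, which would fail) when passing from vanishing on the generators $\theta_j$ to vanishing on all of $V$. Choosing to present the argument via roots of linearized polynomials rather than via an explicit Moore-determinant formula keeps things self-contained and avoids invoking nonstandard identities.
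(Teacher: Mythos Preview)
Your argument is correct: the $\mathbb{F}_q$-linearity of the evaluation map, the passage from vanishing on the $\theta_j$ to vanishing on their $\mathbb{F}_q$-span, and the degree bound $q^{P-1}<q^P$ together force $g\equiv 0$. Note, however, that the paper does not actually supply a proof of this lemma; it simply cites the standard reference (Lidl--Niederreiter) with ``see, e.g., \cite{LidNie}'' and moves on, so there is no in-paper argument to compare against---your self-contained root-counting proof is exactly the classical one found in that reference.
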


Another relevant property of linearized polynomials is that they satisfy the following condition
\begin{align}
f(ax + by)  =  af(x) + bf(y), \ a, b \in \mathbb{F}_q, \ x, y \in \mathbb{F}_{q^m},
\end{align}
which is the reason that they are called \lq\lq{}linearized\rq\rq{}. This property implies the following lemma, the proof of which can be found in \cite{TianChen:16Arxiv}.

\begin{lemma}
\label{lemma:fullrank}
Let $f(x)$ be a linearized polynomial in $\mathbb{F}_{q^m}$ as given in (\ref{eqn:linearized}), and let $\theta_i\in \mathbb{F}_{q^m}$, $i=1,2,\ldots,{P_o}$, be linearly independent over $\mathbb{F}_q$. Let $G$ be a $P_o\times P$ full rank (rank $P$) matrix with entries in $\mathbb{F}_q$, then $f(x)$ can be uniquely identified from
\begin{align}
[f(\theta_1),f(\theta_2),\ldots,f(\theta_{P_o})]\cdot G.
\end{align}
\end{lemma}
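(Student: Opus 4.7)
The plan is to reduce Lemma 2 to Lemma 1 by showing that $[f(\theta_1),\ldots,f(\theta_{P_o})]\cdot G$ can be reinterpreted, in $P$ of its coordinates, as evaluations of $f$ at $P$ points of $\mathbb{F}_{q^m}$ that are linearly independent over $\mathbb{F}_q$. Once that reinterpretation is in place, the uniqueness of $f$ follows immediately from Lemma 1.

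First I would exploit the defining $\mathbb{F}_q$-linearity of $f$, namely $f(ax+by)=af(x)+bf(y)$ for $a,b\in\mathbb{F}_q$. Writing the $j$th column of $G$ as $g_j=(g_{1j},\ldots,g_{P_oj})^T$ with $g_{ij}\in\mathbb{F}_q$, the $j$th entry of $[f(\theta_1),\ldots,f(\theta_{P_o})]\cdot G$ equals
\begin{align*}
\sum_{i=1}^{P_o} g_{ij}\, f(\theta_i) \;=\; f\!\left(\sum_{i=1}^{P_o} g_{ij}\,\theta_i\right) \;=\; f(\phi_j),
\end{align*}
where $\phi_j:=\sum_{i=1}^{P_o} g_{ij}\,\theta_i\in\mathbb{F}_{q^m}$. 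So the received vector is precisely $(f(\phi_1),\ldots,f(\phi_{P_o}))$.

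Next, since $G$ has rank $P$ (over $\mathbb{F}_q$), it has $P$ columns that are $\mathbb{F}_q$-linearly independent; by relabeling, take these to be $g_1,\ldots,g_P$. I claim the corresponding $\phi_1,\ldots,\phi_P$ are $\mathbb{F}_q$-linearly independent. Indeed, if $\sum_{j=1}^{P} c_j \phi_j=0$ with $c_j\in\mathbb{F}_q$, then
\begin{align*}
0 \;=\; \sum_{j=1}^{P} c_j \sum_{i=1}^{P_o} g_{ij}\,\theta_i \;=\; \sum_{i=1}^{P_o}\left(\sum_{j=1}^{P} c_j g_{ij}\right)\theta_i,
\end{align*}
and the $\mathbb{F}_q$-independence of $\{\theta_i\}$ forces $\sum_{j}c_j g_{ij}=0$ for every $i$, i.e., a linear dependence among $g_1,\ldots,g_P$. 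Since those columns were chosen to be independent, $c_1=\cdots=c_P=0$.

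Finally, applying Lemma 1 to the $P$ evaluations $f(\phi_1),\ldots,f(\phi_P)$, which we have just shown are $\mathbb{F}_q$-linearly independent in $\mathbb{F}_{q^m}$, we conclude that $f$ is uniquely determined by these values, and hence by the full product $[f(\theta_1),\ldots,f(\theta_{P_o})]\cdot G$. The only step that requires any care is the $\mathbb{F}_q$-linear independence of the $\phi_j$, but as shown above this is a direct consequence of the $\mathbb{F}_q$-linear independence of the chosen columns of $G$ combined with that of the $\theta_i$; there is no need to invoke any further algebraic machinery.
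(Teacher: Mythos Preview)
Your argument is correct and is exactly the natural route the paper hints at when it says ``This property implies the following lemma'': you use the $\mathbb{F}_q$-linearity of $f$ to rewrite each coordinate of the product as a single evaluation $f(\phi_j)$, verify the $\phi_j$ are $\mathbb{F}_q$-independent, and then invoke Lemma~\ref{lem:useful}. The paper itself does not give a proof here but defers to \cite{TianChen:16Arxiv}; your write-up is precisely what one expects that proof to be.

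Two minor slips to clean up. First, since $G$ is $P_o\times P$ it has only $P$ columns, so the product $[f(\theta_1),\ldots,f(\theta_{P_o})]\cdot G$ is a vector of length $P$, not $P_o$; your line ``the received vector is precisely $(f(\phi_1),\ldots,f(\phi_{P_o}))$'' should end at $f(\phi_P)$. Second, for the same reason there is no need to ``choose $P$ linearly independent columns'' of $G$: rank $P$ with exactly $P$ columns means all columns are already independent, so the relabeling step is unnecessary. Neither affects the validity of the proof.
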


With a fixed set of $\theta_i\in \mathbb{F}_{q^m}$, $i=1,2,\ldots,{P_o}$, which are linearly independent,
we can view $(v_1,\ldots,v_P)$ as information symbols to be encoded, and the evaluations $[f(\theta_1),f(\theta_2),\ldots,f(\theta_{P_o})$] as the coded symbols. This is a $(P_o,P)$ MDS code in terms of rank metric \cite{Gab85}, where $P_o\geq P$. More importantly, the above lemma says any full rank (rank $P$) $\mathbb{F}_q$ linear combinations of the coded  symbols are sufficient to decode all the information symbols. This linear-transformation-invariant property had been utilized previously in other coding problems such as network coding with errors and erasures \cite{Koetter:08}, locally repairable codes with regeneration \cite{Silberstein:15}, and layered regenerating codes \cite{Tian:15}.

The codes thus obtained are not systematic, but they can be converted to systematic codes by viewing the information symbols $(w_1,w_2,\ldots,w_P)$ as the first $P$ evaluations $[f(\theta_1),f(\theta_2),\ldots,f(\theta_{P})]$, which can be used to find the coefficients of the linearized polynomial $(v_1,v_2,\ldots,v_P)$, and then the additional parity symbols can be generated by evaluating this linearized polynomial at the remaining points $(\theta_{P+1},\ldots,\theta_{P_o})$. Systematic rank-metric codes are instrumental in our construction.

\subsection{Demand Vectors and Transmission Types}

Denote the $N$ files in the system as $W_1,W_2,\ldots,W_N$, and denote the demands by the users in the delivery phase as $\vec{d}=(d_1,d_2,\ldots,d_K)$, where $d_k\in \{1,2,...,N\}$ is the index of the file that user-$k$ requests. For convenience, denote the set $\{1,2,\ldots,n\}$ as $I_n$. Recall that once the auxiliary parameter $t$ is fixed, each file $W_n$ in the scheme of \cite{MaddahAliNiesen:14} is the collection of all segments $W_{n,\mathcal{S}}$ where $\mathcal{S}\subseteq I_K$ and $|\mathcal{S}|=t$, where $|\mathcal{S}|$ is the cardinality of the set $\mathcal{S}$. For a given demand vector $\vec{d}=(d_1,d_2,\ldots,d_K)$, denote the set of users requesting file $W_n$ as
\begin{align}
&I^{[n]}\triangleq \{k\in I_K: \text{user $k$ requests file } W_n\},\qquad n=1,2,\ldots,N.
\end{align}
Further define $m_n\triangleq|I^{[n]}|$, $n=1,2,\ldots,N$. In the coding scheme we shall present, an arbitrary element (for example, the minimum element) in $I^{[n]}$ will be chosen, denoted as $\ell^{[n]}$, as the leader of $I^{[n]}$. The support of vector $\vec{m}$ is written as $\mathbb{supp}(\vec{m})$, {\em i.e.}, $\mathbb{supp}(\vec{m})=\{n| m_n>0\}$, and its cardinality is denoted as $N^*=|\mathbb{supp}(\vec{m})|$, which is the number of files being requested in $\vec{d}$. For convenience, also define $\tilde{N}\triangleq \min(N,K)$.

The notion of the {\em transmission type} is associated with each transmission in the scheme in \cite{MaddahAliNiesen:14}. For a set of users $\mathcal{B}\subseteq I_K$ where $|\mathcal{B}|=t+1$, the associated delivery transmission in the scheme of \cite{MaddahAliNiesen:14}, for a fixed demand vector $(d_1,d_2,\ldots,d_K)$, can be compactly written as the binary field summation
\begin{align}
\oplus_{k\in \mathcal{B}} W_{{d_k},\mathcal{B}\setminus k}. \label{eqn:compact}
\end{align}
Each such transmission, or alternatively the subset $\mathcal{B}$, is thus associated with an $N$-dimensional vector $\vec{t}$, whose $n$-th coordinate $t_n$ specifies the number of users that are demanding file $W_n$ in the set $\mathcal{B}$. We call this vector the transmission type of the subset $\mathcal{B}$. For example, in the $(3,4)$ case discussed above when the demand vector is $(W_1,W_2,W_3,W_4)=(A,A,B,C)$, the transmission type of the user set $\mathcal{B}=\{1,2,3\}$ is $\vec{t}=(2,1,0)$, and the exact transmission is $A_{2,3}+A_{1,3}+B_{1,2}$ where there are exactly two $W_1=A$ symbols involved and one $W_2=B$ symbol involved. Similarly, the transmission types of the user sets $\{2,3,4\}$ and $\{1,3,4\}$ are both $\vec{t}=(1,1,1)$. 

Denote the collection of all valid transmission types for a given demand vector $\vec{d}$ with the auxiliary parameter being $t$ as $\mathcal{T}^{(t)}_{\vec{d}}$. It is clear that for any valid transmission type $\vec{t}\in \mathcal{T}^{(t)}_{\vec{d}}$, we have $\sum_{n=1}^N t_n=t+1$, and thus the auxiliary parameter $t$ can be uniquely determined from any valid $\vec{t}$. The support of a transmission type $\vec{t}$ is denoted as $\mathbb{supp}(\vec{t})$. With a slight abuse of notation, we write the transmission type of a given set $\mathcal{B}\subseteq I_K$, where $|\mathcal{B}|=t+1$, as $\mathcal{T}(\mathcal{B})$. 

The notion of transmission type should be contrasted to the notion of {\em demand type} introduced in \cite{Tian:16symmetry}, which is a length-$N$ vector formed by sorting $(m_1,m_2,\ldots,m_N)$. This notion is also important in our work, because the symmetry in the proposed code implies that only one demand vector per demand type needs to be considered. Denote the collection of the representative demand vectors, one representative demand vector per demand type, as $\mathcal{D}$. 

\section{A Hidden Connection and Partial Decomposition}
\label{sec:critical}

The two schemes in \cite{TianChen:16Arxiv} and \cite{MaddahAliNiesen:14} (and its improved counterpart \cite{Yu:16}) may seem very different at the first sight:  one uses coded prefeteching and the other uncoded, one is non-binary code while the other is binary, and one relies on sophisticated coding techniques such as rank metric codes and the other only relatively simple combinatorics. Nevertheless, a closer look reveals some curious connections between the two schemes. For example, the tradeoff points in (\ref{eqn:TianChen}) lead to the rate values $R=\frac{N(K-t)}{K}$ for $t=0,1,\ldots,K$, which are exactly the same set of $M$ values given in (\ref{eqn:Yu}). This connection may or may not be a simple coincidence, however we next describe a much less obvious observation which leads to the main result of this paper.

\subsection{A Hidden Connection}

Consider again the example case for $(N,K)=(3,4)$ and $t=2$. Let us decompose the transmissions in (\ref{eqn:34uncoded}) by separating different files in the same linear combination. For example, the linear combination $A_{2,3}+ A_{1,3}+ B_{1,2}$ is decomposed into a pair of transmissions $(A_{2,3}+ A_{1,3},B_{1,2})$. It can be verified that decomposing all the linear combinations in (\ref{eqn:34uncoded}) in fact produces exactly the same set of linear combinations in (\ref{eqn:34coded}), after removing the repeated transmissions. Thus in this example, the delivery transmissions in the scheme \cite{TianChen:16Arxiv} can be obtained by fully decomposing the delivery transmissions of the scheme in \cite{MaddahAliNiesen:14}, when the auxiliary parameter $t$ is chosen to be the same in the two schemes. 

We note that the addition in (\ref{eqn:34coded}) is not in a binary field, while the addition in (\ref{eqn:34uncoded}) is  in the binary field. However, if a binary extension field $\mathbb{F}_{2^m}$ is used in (\ref{eqn:34coded}), the delivery can indeed be accomplished using only additions of the information symbols in this binary extension field, {\em i.e.,} the coefficients of the linear combinations are either $0$ or $1$. Clearly, such additions are equivalent to additions in the base binary field, when the information and coded symbols are represented in their binary vector form. 

For other $(N,K)$ parameters, by replacing the MDS code component in the coding scheme in \cite{TianChen:16Arxiv} with such decomposed delivery transmissions from the scheme of \cite{MaddahAliNiesen:14}, an alternative version of the code given in \cite{TianChen:16Arxiv} can be obtained. In fact, instead of simply presenting this alternative code construction, an even more general construction shall be provided, based on an example given next.

\begin{figure}[tb]
\centering
\includegraphics[width=12cm]{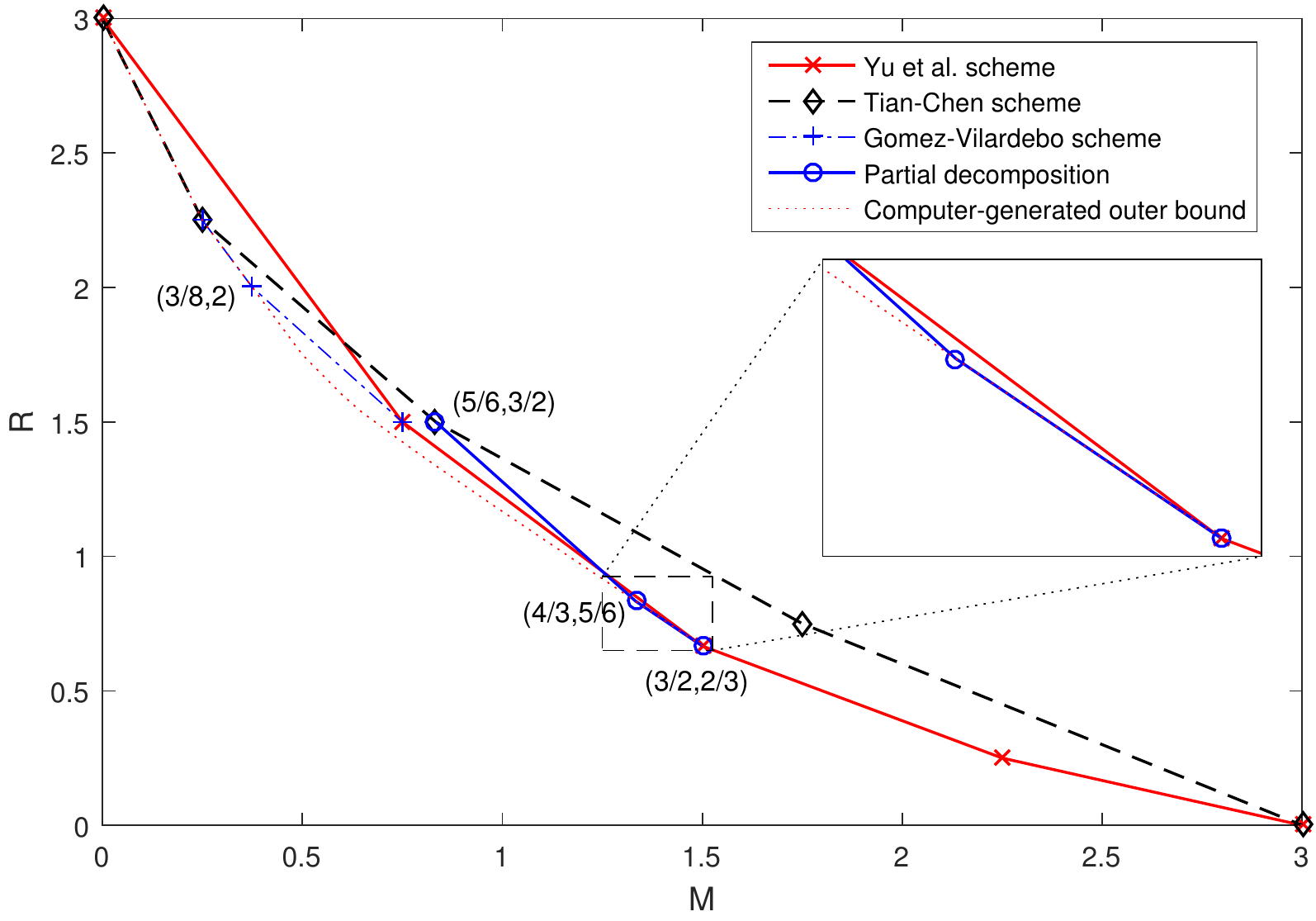}
\caption{A new tradeoff point for $(N,K)=(3,4)$. \label{fig:34}}
\end{figure}

\subsection{Partial Decomposition and a New Code Example}
\label{subsec:new}

The above observation naturally raises the following question: since the delivery strategy in the scheme of \cite{TianChen:16Arxiv} can be viewed as being obtained from fully decomposing the delivery transmissions of the scheme in \cite{MaddahAliNiesen:14}, will partial decomposition, with a correspondingly modified prefetching strategy, produce new memory-rate tradeoff pairs?  Next we provide an example code, which shows that the answer to this question is indeed positive. 

Consider again the case $(N,K)=(3,4)$ and $t=2$, but this time each user caches 8 (instead of 9 as in Sec. \ref{subsec:codeexample1}, or 5 as in Sec. \ref{subsec:codedexample2}) linear combinations of the information symbols of the corresponding uncoded file segments. The coefficients of the linear combinations can again be either from deterministic rank metric codes (see Section \ref{sec:newcode}), or generated randomly in a large finite field.

We next argue that delivering a total of $5$ coded symbols is sufficient in this case, which gives an achievable memory-rate pair $(M,R)=(4/3,5/6)$. The memory-rater pair is strictly better than $(4/3,23/27)$ achieved by the lower convex hull of the schemes \cite{Yu:16,TianChen:16Arxiv,gomez2016fundamental}, which is currently the best known upper bound in the literature; see Fig. \ref{fig:34} for an illustration. For completeness, a computer-generated outer bound is also included in the figure, which was obtained in a separate work \cite{Tian:16Computer}. Interestingly, both $(M,R)=(3/8,2)$ given by the code in \cite{gomez2016fundamental} and $(M,R)=(4/3,5/6)$ obtained in this work are in fact on this outer bound, and thus optimal. 

Due to the symmetry in the code, we only need to consider the demand vectors $(A,A,B,C)$, $(A,A,B,B)$, $(A,A,A,C)$, and $(A,A,A,A)$. 
\begin{itemize}
\item For the demand $(A,A,B,C)$, instead of fully decomposing the transmissions in (\ref{eqn:34uncoded}), we now partially decompose them as 
\begin{gather}
A_{2,3}+A_{1,3}+B_{1,2},A_{2,4}+A_{1,4}+C_{1,2},B_{1,4}+C_{1,3},B_{2,4}+C_{2,3},A_{3,4}.\label{eqn:partialdecompose}
\end{gather}
User 1 first collects $B_{1,4}+C_{1,3}$, and thus together with the 8 cached linear combinations, can resolve all the symbols in (\ref{eqn:34_user1symbols}) since he has a total of $9$ linearly-independent linear combinations of the 9 symbols; now user 1 essentially has uncoded cache contents, and thus can recover the needed file segments of $A$ (which are $A_{2,3},A_{2,4},A_{3,4}$) using the remaining transmissions. Users 2, 3, and 4 can use a similar strategy. 
\item For the demand $(A,A,B,B)$, we can transmit the following symbols
\begin{gather}
A_{2,3}+A_{1,3}+B_{1,2},A_{2,4}+A_{1,4}+B_{1,2},B_{1,4}+B_{1,3},B_{2,4}+B_{2,3},A_{3,4}.\label{eqn:partialdecompose1}
\end{gather}
User 1 can collect $B_{1,4}+B_{1,3}$ in order to resolve the cached symbols, and the decoding is similar to the previous case. It is also obvious user 3 and user 4 can indeed recover file $B$.
\item For the demand $(A,A,A,C)$, we can transmit
\begin{gather}
A_{2,3}+A_{1,3}+A_{1,2},A_{2,4}+A_{1,4}+C_{1,2},A_{1,4}+C_{1,3},A_{2,4}+C_{2,3},A_{3,4}.\label{eqn:partialdecompose2}
\end{gather}
The decoding strategies of other users are similar to the previous cases, and let us only consider user 3 as an illustration. User 3 can use $A_{3,4}$ to resolve the symbols in the cache, then recover $A_{1,4}$ from $A_{1,4}+C_{1,3}$, $A_{2,4}$ from $A_{2,4}+C_{2,3}$, and $A_{1,2}$ from $A_{2,3}+A_{1,3}+A_{1,2}$. 
\item For the demand $(A,A,A,A)$, we can transmit
\begin{gather}
A_{2,3}+A_{1,3}+A_{1,2},A_{2,4}+A_{1,4}+A_{1,2},A_{1,4}+A_{1,3},A_{2,4}+A_{2,3},A_{3,4}.\label{eqn:partialdecompose3}
\end{gather}
Using a similar strategy as above, it is seen that all users can indeed recover file $A$.
\end{itemize}

In this example case, the delivery transmissions are obtained by partially decomposing the transmissions in the scheme of \cite{MaddahAliNiesen:14}, and in compensation, the number of cached linear combinations in users' memory is reduced from that of \cite{MaddahAliNiesen:14}. The number of linear combinations stored in the cache needs to guarantee that the coded symbols can all be resolved to their uncoded form, after a sufficient number of symbols have been collected from the delivery. The rest of the paper is devoted to the task of using this idea to build a general class of codes which yield a new inner bound to the memory-rate tradeoff.

\section{A New Inner Bound  to the Optimal Memory-Rate Tradeoff}
\label{sec:alternative}

We first formally define the partial decomposition patterns, and then present the new inner bound. The prefetching strategy and the delivery strategy behind this new bound are presented and analyzed in the next section. 

\subsection{A Formal Description of Partial Decomposition}

Fix the auxiliary parameter $t\in I_K$, and for now also consider a fixed demand vector $\vec{d}$. A valid partial decomposition pattern on a transmission type $\vec{t}$ is specified by a partition $\mathcal{P}_{\vec{t},\vec{d}}$ on $\mathbb{supp}(\vec{t})$, {\em i.e.,} the elements of $\mathcal{P}_{\vec{t},\vec{d}}$ are mutually exclusive and jointly exhaustive subsets of $\mathbb{supp}(\vec{t})$. For a given transmission type $\vec{t}$ and its partial decomposition pattern $\mathcal{P}_{\vec{t},\vec{d}}$, the decomposed transmissions are formed by keeping the symbols in the same partition in $\mathcal{P}_{\vec{t},\vec{d}}$ together, but those across partitions separated. More precisely, let $\mathcal{T}(\mathcal{B})=\vec{t}$, then the transmission (\ref{eqn:compact}) can be rewritten and thus decomposed as 
\begin{align}
\oplus_{k\in \mathcal{B}} W_{{d_k},\mathcal{B}\setminus k}=\oplus_{\mathscr{P}\in \mathcal{P}_{\vec{t},\vec{d}}} \left[\oplus_{n\in\mathscr{P}}\left(\oplus_{k\in \mathcal{B}\cap I^{[n]}} W_{n,\mathcal{B}\setminus k}\right)\right]\quad\Rightarrow\quad \oplus_{n\in\mathscr{P}}\left(\oplus_{k\in \mathcal{B}\cap I^{[n]}} W_{n,\mathcal{B}\setminus k}\right),\,  \mathscr{P} \in \mathcal{P}_{\vec{t},\vec{d}}, \label{eqn:directdecomp}
\end{align}
where $\mathscr{P}\subseteq \mathbb{supp}(\vec{t})$ is used to enumerate over the partitions specified by $\mathcal{P}_{\vec{t},\vec{d}}$. Note that $\mathcal{P}_{\vec{t},\vec{d}}$ is the decomposition pattern for a transmission type $\vec{t}$ (and a demand vector $\vec{d}$), which implies that the transmissions of the same transmission type are not allowed to use different decomposition patterns. In order to specify the delivery transmissions for a given demand vector $\vec{d}$, the decomposition patterns for all transmission types should be given, which are written as a set $\vec{\mathcal{P}}^{(t)}_{\vec{d}}\triangleq \{\mathcal{P}_{\vec{t},\vec{d}}| \vec{t}\in \mathcal{T}^{(t)}_{\vec{d}}\}$. 

Consider again the example for $(N,K)=(3,4)$: suppose the demand vector is $$\vec{d}=(A,A,B,C)=(1,1,2,3),$$ and for the transmission type $\vec{t}=(1,1,1)$, the decomposition pattern is $\mathcal{P}_{(1,1,1),(1,1,2,3)}=\{\{1\},\{2,3\}\}.$ With these settings the two transmissions $A_{3,4}+B_{1,4}+C_{1,3}$ and $A_{3,4}+B_{2,4}+C_{2,3}$ in the coding scheme \cite{MaddahAliNiesen:14} will be decomposed into $
\{A_{3,4},B_{1,4}+C_{1,3}\}$ and $\{A_{3,4},B_{2,4}+C_{2,3}\}$, respectively.

For any demand vector $\vec{d}$, a special uncoded transmission pattern, denoted as $\breve{\vec{\mathcal{P}}}_{\vec{d}}^{(t)}$, is also allowed. When $K-t\geq \tilde{N}$, this strategy corresponds to directly transmitting a subset of files in the uncoded form. For general parameters, the transmission strategy will be given more precisely in Section \ref{sec:delivery}. The introduction of this pattern is motivated by the coding strategy in \cite{TianChen:16Arxiv} when $N^*<\tilde{N}$.

\subsection{A New Inner Bound}

Define the following quantity for any transmission pattern $\vec{\mathcal{P}}_{\vec{d}}^{(t)}$ except $\vec{\mathcal{P}}_{\vec{d}}^{(t)}=\breve{\vec{\mathcal{P}}}_{\vec{d}}^{(t)}$
\begin{equation}
R_{\vec{d},\vec{\mathcal{P}}^{(t)}_{\vec{d}}} \triangleq \sum_{\vec{t} \in \mathcal{T}^{(t)}_{\vec{d}}} \sum_{\mathscr{P} \in \mathcal{P}_{\vec{t},\vec{d}}} \left[\Bigg(\prod_{n \in \mathscr{P}} \binom{m_n}{t_n} - \prod_{n \in \mathscr{P}} \binom{m_n-1}{t_n} \Bigg)\cdot\prod_{n\in \mathbb{supp}(\vec{t})\setminus \mathscr{P}}{m_n \choose t_n}\right], \label{eqn:Rs}
\end{equation}
and for $k=1,2,\ldots,K$, 
\begin{align}
& M_{\vec{d}, \vec{\mathcal{P}}^{(t)}_{\vec{d}}, k} \triangleq  N \binom{K-1}{t-1} - \Delta M_{\vec{d},\vec{\mathcal{P}}^{(t)}_{\vec{d}}, k},\label{eqn:Ms}
\end{align}
where 
\begin{align}
&\Delta M_{\vec{d},\vec{\mathcal{P}}^{(t)}_{\vec{d}}, k}\triangleq \nonumber\\
&\quad\sum_{\substack{\vec{t} \in \mathcal{T}^{(t)}_{\vec{d}}:\\ t_{d_k} >0}}
\sum_{\substack{\mathscr{P} \in \mathcal{P}_{\vec{t},\vec{d}}:\\ d_k \notin \mathscr{P}}} \binom{m_{d_k}-1}{t_{d_k}-1}\left[\left( \prod_{n \in \mathscr{P}}\binom{m_n}{t_n}-\prod_{n \in \mathscr{P}} \binom{m_n-1}{t_n}\right)\cdot \prod_{n \in \mathbb{supp}(\vec{t})\setminus\{\mathscr{P}\cup\{d_k\}\}} \binom{m_n}{t_n}\right].\label{eqn:DeltaM}
\end{align}
For the special transmission pattern $\vec{\mathcal{P}}^{(t)}_{\vec{d}}=\breve{\vec{\mathcal{P}}}^{(t)}_{\vec{d}}$, the corresponding quantities are defined as
\begin{align}
R_{\vec{d},\breve{\vec{\mathcal{P}}}^{(t)}_{\vec{d}}} \triangleq \min(K-t,\tilde{N}){K \choose t}, \label{eqn:Rs1}
\end{align}
and for $k=1,2,\ldots,K$,
\begin{align}
& M_{\vec{d}, \breve{\vec{\mathcal{P}}}^{(t)}_{\vec{d}}, k} \triangleq  N \binom{K-1}{t-1} - \Delta M_{\vec{d},\breve{\vec{\mathcal{P}}}^{(t)}_{\vec{d}}, k}, \label{eqn:Ms1}
\end{align}
where
\begin{align}
\Delta M_{\vec{d},\breve{\vec{\mathcal{P}}}^{(t)}_{\vec{d}}, k}\triangleq\min(K-t,\tilde{N}) {K-1 \choose t-1}.
\end{align}
In the above, the following convention for the degenerate cases of combinatorics has been used
\begin{align}
\binom{a}{b} = 
\begin{cases}
    0, & \text{if }  a<b\\
    1, & \text{if } a \geq 0 \mbox{ and } b =0
\end{cases}.
\end{align}

Intuitively speaking, the vector $(M_{\vec{d}, \vec{\mathcal{P}}^{(t)}_{\vec{d}}, 1},...,M_{\vec{d}, \vec{\mathcal{P}}^{(t)}_{\vec{d}}, K},R_{\vec{d},\vec{\mathcal{P}}^{(t)}_{\vec{d}}})$ provides the cache memory requirements at the users and the rate requirement on the delivery transmission in the proposed coding scheme, when the demand vector $\vec{d}$ and the decomposition patterns $\vec{\mathcal{P}}^{(t)}_{\vec{d}}$ are fixed. Thus these numbers roughly provide the memory-rate tradeoff for the specific demand vector for a fixed decomposition pattern. 

We first observe that there may still be unbalance among the cache memory requirements at different users $(M_{\vec{d}, \vec{\mathcal{P}}^{(t)}_{\vec{d}}, 1},...,M_{\vec{d}, \vec{\mathcal{P}}^{(t)}_{\vec{d}}, K})$, meaning that different users may have different cache memory requirements under the decomposition pattern $\vec{\mathcal{P}}^{(t)}_{\vec{d}}$. This issue can be mitigated by coding across multiple instances in the prefetching phase and then producing the delivery transmissions using multiple different decomposition patterns on different instances to achieve better balance among users; note that simple space-sharing is not sufficient to achieve such a performance. A second important observation is that regardless the demand vectors or the decomposition patterns, the caching strategy can essentially be kept the same, which is to store a certain number of linear combinations of a fixed set of symbols. The two observations lead to the following definition and the main theorem below, the formal proof of which will be given in the sequel.

Define the region $\mathcal{R}^{(t)}$ to be the collection of the memory-rate pairs $(M,R)$ such that there exists a set of real-valued $\{\alpha_{\vec{d},\vec{\mathcal{P}}^{(t)}_{\vec{d}}}\}$ such that
\begin{align}
\sum_{\vec{\mathcal{P}}^{(t)}_{\vec{d}}} \alpha_{\vec{d},\vec{\mathcal{P}}^{(t)}_{\vec{d}}}&=1,&\quad \forall \vec{d}\in {\mathcal{D}}\label{eqn:LP1}\\
\alpha_{\vec{d},\vec{\mathcal{P}}^{(t)}_{\vec{d}}}&\geq 0,&\quad \forall \vec{d}\in{\mathcal{D}}, \forall \vec{\mathcal{P}}^{(t)}_{\vec{d}}\\
\alpha_{\vec{d},\vec{\mathcal{P}}^{(t)}_{\vec{d}}}&\leq 1,&\quad \forall \vec{d}\in{\mathcal{D}}, \forall \vec{\mathcal{P}}^{(t)}_{\vec{d}}\\
\sum_{\vec{\mathcal{P}}^{(t)}_{\vec{d}}} \alpha_{\vec{d},\vec{\mathcal{P}}^{(t)}_{\vec{d}}} R_{\vec{d},\vec{\mathcal{P}}^{(t)}_{\vec{d}}}&\leq R {K \choose t},&\quad \forall \vec{d}\in {\mathcal{D}}\\
\sum_{\vec{\mathcal{P}}^{(t)}_{\vec{d}}} \alpha_{\vec{d},\vec{\mathcal{P}}^{(t)}_{\vec{d}}} M_{\vec{d},\vec{\mathcal{P}}^{(t)}_{\vec{d}},k}&\leq M {K \choose t},&\quad \forall \vec{d}\in{\mathcal{D}}, \forall k\in I_K.\label{eqn:LP5}
\end{align}

The auxiliary variables $\{\alpha_{\vec{d},\vec{\mathcal{P}}^{(t)}_{\vec{d}}}\}$ serve a similar role to the time-sharing variables, however the region cannot be directly obtained by the time-sharing argument, and is instead obtained by a slightly more elaborate coding approach. We need the following technical definition \footnote{Here we allow a sequence of codes to achieve the $(M,R)$ pair in an asymptotic manner, {\em i.e.,} approaches this normalized memory-rate pair as the size of the file grows to infinity. Strictly speaking, this approach of definition is not necessary for us, since the quantities we obtain in (\ref{eqn:Rs}) and (\ref{eqn:Ms}) are always integers, and thus the extreme points of the constrained polytope in (\ref{eqn:LP1})-(\ref{eqn:LP5}) will always be rational, which can be achieved precisely using the proposed scheme. Then a time-sharing argument can be invoked to argue any irrational-valued memory-rate pairs in the region can be achieved. Definition \ref{def:region} however avoids this line of argument altogether.} to state the main result, which is a new inner bound to the memory-rate tradeoff region. 
\begin{definition}
\label{def:region}
A memory-rate pair $(M,R)$ is called achievable, if for any $\delta>0$, and for any sufficiently large file size $F$, there exists a code with a normalized memory size no greater than $M+\delta$ and a normalized transmission rate no greater than $R+\delta$. 
\end{definition}

\begin{theorem}
For any $t=0,1,2,\ldots,K$, any $(M,R)\in\mathcal{R}^{(t)}$ is achievable. As a consequence,  the convex closure $\mathbf{cl}\left(\cup_{t=0,\ldots,K}\mathcal{R}^{(t)}\right)$ is achievable, where $\mathbf{cl}(\cdot)$ means the convex closure. 
\label{theorem:main}
\end{theorem}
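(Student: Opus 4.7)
The plan is to construct an explicit code that achieves every $(M,R)\in\mathcal{R}^{(t)}$ for each fixed $t$; the convex closure statement then follows by a standard memory/rate-sharing argument across the $K+1$ values of $t$. The prefetching I would use is universal: each file $W_n$ is split into $\binom{K}{t}$ segments $W_{n,\mathcal{S}}$ indexed by $t$-subsets $\mathcal{S}\subseteq I_K$, viewed as symbols in a large binary extension field $\mathbb{F}_{2^m}$, and user $k$'s cache stores $\mathbb{F}_2$-linear combinations of the $N\binom{K-1}{t-1}$ segments $\{W_{n,\mathcal{S}}: k\in\mathcal{S}\}$ produced by a systematic rank-metric code as in Lemma~\ref{lemma:fullrank}. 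The number of combinations stored at user $k$ is what varies across sub-instances, but the underlying segment set is always the same, so multiple sub-instances can share the same prefetching backbone.

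Second, for a fixed pair $(\vec{d},\vec{\mathcal{P}}^{(t)}_{\vec{d}})$ with $\vec{\mathcal{P}}^{(t)}_{\vec{d}}\neq\breve{\vec{\mathcal{P}}}^{(t)}_{\vec{d}}$, the delivery is obtained by enumerating every $(t+1)$-subset $\mathcal{B}$, computing its type $\vec{t}=\mathcal{T}(\mathcal{B})$, and splitting the standard XOR sum of (\ref{eqn:compact}) according to the partition $\mathcal{P}_{\vec{t},\vec{d}}$ as in (\ref{eqn:directdecomp}). Redundant per-block transmissions — those whose segments all sit in the cache of some leader $\ell^{[n]}$ with $n\in\mathscr{P}$ — are then pruned \`a la~\cite{Yu:16}. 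I would verify (\ref{eqn:Rs}) by inclusion-exclusion: $\prod_{n\in\mathscr{P}}\binom{m_n}{t_n}-\prod_{n\in\mathscr{P}}\binom{m_n-1}{t_n}$ enumerates the restrictions of $\mathcal{B}$ to $\cup_{n\in\mathscr{P}}I^{[n]}$ that do not lie entirely inside some leader's cache, while the remaining factor counts the free choices on $\mathbb{supp}(\vec{t})\setminus\mathscr{P}$. For the special pattern $\breve{\vec{\mathcal{P}}}^{(t)}_{\vec{d}}$, (\ref{eqn:Rs1}) follows directly by sending a subset of files in uncoded form.

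Third, the decodability argument, which is where the main technical work lies. To recover $W_{d_k}$, user $k$ first resolves the $N\binom{K-1}{t-1}$ segments in its cache region $\{W_{n,\mathcal{S}}: k\in\mathcal{S}\}$, after which the remaining transmissions (each containing at most one unknown $W_{d_k}$-segment) immediately yield the missing pieces. Resolution requires $N\binom{K-1}{t-1}$ linearly independent $\mathbb{F}_2$-equations in the cache region, split as the $M_{\vec{d},\vec{\mathcal{P}}^{(t)}_{\vec{d}},k}$ cached combinations plus the delivered transmissions whose entire support lies in $k$'s cache. A delivered transmission indexed by $(\mathcal{B},\mathscr{P})$ has this property precisely when $k\in\mathcal{B}$, $t_{d_k}>0$, and $d_k\notin\mathscr{P}$, and counting such $(\mathcal{B},\mathscr{P})$ by fixing $k$'s seat in the $d_k$-slot yields (\ref{eqn:DeltaM}). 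Lemma~\ref{lemma:fullrank} applied to the aggregated $N\binom{K-1}{t-1}\times N\binom{K-1}{t-1}$ generator matrix then ensures invertibility over a sufficiently large $\mathbb{F}_{2^m}$.

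Fourth, to reach every point of $\mathcal{R}^{(t)}$, I would split each file into sub-files of sizes proportional to $\alpha_{\vec{d},\vec{\mathcal{P}}^{(t)}_{\vec{d}}}$, indexed by pairs $(\vec{d},\vec{\mathcal{P}}^{(t)}_{\vec{d}})$ with $\vec{d}\in\mathcal{D}$. The universal prefetching is applied to each sub-file independently; at delivery time, the realized demand is reduced to a representative in $\mathcal{D}$ by the demand-type symmetry of~\cite{Tian:16symmetry}, and the sub-file associated to each pair is delivered using the decomposition pattern in that pair. Weighted averaging turns the per-instance bounds (\ref{eqn:Rs}) and (\ref{eqn:Ms}) into the LP constraints (\ref{eqn:LP1})--(\ref{eqn:LP5}); a final round of symmetrization over cyclic shifts of user indices equalizes memory across users so that every user meets $M$ simultaneously. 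The main obstacle I anticipate is the combinatorial bookkeeping in step three, and specifically showing that the counting (\ref{eqn:DeltaM}) is tight for \emph{every} user $k$ and not merely the leader $\ell^{[d_k]}$; a secondary delicate point is invoking the special pattern $\breve{\vec{\mathcal{P}}}^{(t)}_{\vec{d}}$ in those degenerate demand vectors with $N^*<\tilde{N}$ where the partial decomposition above would otherwise fail to produce enough resolving equations.
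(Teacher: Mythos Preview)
Your overall architecture matches the paper's, but step four contains a genuine conceptual error that breaks the scheme. You propose splitting each file into sub-files indexed by pairs $(\vec{d},\vec{\mathcal{P}}^{(t)}_{\vec{d}})$ with sizes proportional to $\alpha_{\vec{d},\vec{\mathcal{P}}^{(t)}_{\vec{d}}}$, and then delivering each sub-file ``using the decomposition pattern in that pair.'' This cannot work: the prefetching is committed before the demand is revealed, so a sub-file tagged with some $\vec{d}'$ must still be delivered against the \emph{realized} demand $\vec{d}^*$, and a decomposition pattern $\vec{\mathcal{P}}^{(t)}_{\vec{d}'}$ is meaningless for $\vec{d}^*\neq\vec{d}'$ (the transmission-type sets $\mathcal{T}^{(t)}_{\vec{d}'}$ and $\mathcal{T}^{(t)}_{\vec{d}^*}$ do not even coincide). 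The paper's resolution is exactly the point the authors flag as going beyond ordinary space-sharing: one uses $r$ instances, each prefetched \emph{identically} (every user stores the same number $P_o-P=rM'_r\binom{K}{t}$ of rank-metric parities, where $M'_r$ is the worst case over all $\vec{d}\in\mathcal{D}$ and all $k$), and the assignment of instances to decomposition patterns is made \emph{at delivery time}: for the realized $\vec{d}$, pattern $\vec{\mathcal{P}}^{(t)}_{\vec{d},j}$ is applied to $r_{\vec{d},\vec{\mathcal{P}}^{(t)}_{\vec{d},j}}\approx r\,\alpha_{\vec{d},\vec{\mathcal{P}}^{(t)}_{\vec{d},j}}$ of the $r$ instances. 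Because the cached content is identical across instances and demand-agnostic, the same cache supports every $\vec{d}$; this is why the constraints (\ref{eqn:LP1})--(\ref{eqn:LP5}) are stated separately for each $\vec{d}$ rather than summed over $\vec{d}$. Your cyclic-shift symmetrization is then unnecessary: every user already stores the same maximum amount, and the per-$k$ constraint (\ref{eqn:LP5}) guarantees that maximum is at most $M\binom{K}{t}$.

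A second, smaller gap sits in step three. You invoke Lemma~\ref{lemma:fullrank} to conclude invertibility, but that lemma requires the $\mathbb{F}_2$-matrix $G$ formed by the cached parities together with the collected delivery equations to have full rank; you never argue that the $\Delta M_{\vec{d},\vec{\mathcal{P}}^{(t)}_{\vec{d}},k}$ delivered combinations user $k$ collects are themselves linearly independent. The paper handles this in two steps: Lemma~\ref{lemma:separation} shows that transmissions arising from distinct triples $(\vec{t},\mathscr{P},\mathcal{A})$ involve disjoint symbol sets $\mathcal{W}_{\vec{d},\vec{t},\mathscr{P},\mathcal{A}}$, so independence need only be checked within each such set; Lemma~\ref{lemma:independenceLong} then verifies independence inside a single $\mathcal{W}_{\vec{d},\vec{t},\mathscr{P},\mathcal{A}}$ by recognizing those transmissions as a subset of the Yu--Maddah-Ali--Avestimehr delivery for a reduced caching instance and invoking the per-demand optimality of \cite{Yu:16} (any dependence would contradict that optimality). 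This is the substance behind the ``combinatorial bookkeeping'' you anticipate, and it is a genuine independence argument rather than just a count.
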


The proof of this theorem will be given in Section \ref{sec:newcode}. We also have the following corollary, whose proof is given in the appendix.
\begin{corollary}
\label{corollary:specialcase}
The memory-rate pairs in (\ref{eqn:Yu}) and those in (\ref{eqn:TianChen}) are in the region $\mathbf{cl}\left(\cup_{t=0,\ldots,K}\mathcal{R}^{(t)}\right)$.
\end{corollary}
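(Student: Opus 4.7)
The plan is to exhibit, for each of the two families of points, explicit coefficients $\{\alpha_{\vec{d},\vec{\mathcal{P}}^{(t)}_{\vec{d}}}\}$ that satisfy the linear constraints (\ref{eqn:LP1})--(\ref{eqn:LP5}) at the claimed $(M,R)$. For the Yu et al. points (\ref{eqn:Yu}) at parameter $t$, I would put, for every $\vec{d}\in\mathcal{D}$, all the mass on the \emph{trivial} undecomposed pattern $\vec{\mathcal{P}}_{\mathrm{trivial}}$ in which $\mathcal{P}_{\vec{t},\vec{d}}=\{\mathbb{supp}(\vec{t})\}$ for every $\vec{t}$. Extending the products in (\ref{eqn:Rs}) from $\mathbb{supp}(\vec{t})$ to $\mathbb{supp}(\vec{m})$ (which leaves them unchanged, since the extra factors equal $1$), two applications of Vandermonde's identity collapse the sum to $R_{\vec{d},\vec{\mathcal{P}}_{\mathrm{trivial}}}=\binom{K}{t+1}-\binom{K-N^*}{t+1}$, which is monotone non-increasing in $K-N^*$ and hence maximized at $N^*=\tilde{N}$, giving exactly $R_{Yu}\binom{K}{t}$. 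The memory constraint is automatic, because $M_{\vec{d},\vec{\mathcal{P}},k}\le N\binom{K-1}{t-1}=(tN/K)\binom{K}{t}=M_{Yu}\binom{K}{t}$ for any pattern.

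For the Tian-Chen points (\ref{eqn:TianChen}) at parameter $t$ (so $N\le K$ and $\tilde{N}=N$), I would mix the \emph{fully-decomposed} pattern $\vec{\mathcal{P}}_{\mathrm{full}}$, defined by $\mathcal{P}_{\vec{t},\vec{d}}=\{\{n\}:n\in\mathbb{supp}(\vec{t})\}$ for every $\vec{t}$, with the special pattern $\breve{\vec{\mathcal{P}}}$. Swapping summation order and applying Vandermonde's identity with the shift $s_n=t_n-1$ (and, for the memory expression, additionally $s_{d_k}=t_{d_k}-1$), one obtains the clean identities
\begin{align*}
R_{\vec{d},\vec{\mathcal{P}}_{\mathrm{full}}} &= N^*\binom{K-1}{t}, \\
M_{\vec{d},\vec{\mathcal{P}}_{\mathrm{full}},k} &= N\binom{K-1}{t-1}-(N^*-1)\binom{K-2}{t-1},
\end{align*}
both depending on $\vec{d}$ only through $N^*$ and independent of $k$. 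In the principal case $K-t<N$ (so $\min(K-t,\tilde{N})=K-t$), set $\alpha_{\mathrm{full}}=(K-N)/(K-N^*)\in[0,1]$ and $\alpha_{\breve{\vec{\mathcal{P}}}}=1-\alpha_{\mathrm{full}}$. Substituting and using $\binom{K-2}{t-1}=(K-t)\binom{K-1}{t-1}/(K-1)$ together with $(K-t)\binom{K}{t}=K\binom{K-1}{t}$, both the memory and the rate expressions collapse exactly to $M_{TC}\binom{K}{t}$ and $R_{TC}\binom{K}{t}=N\binom{K-1}{t}$, meeting (\ref{eqn:LP5}) and the rate constraint with equality. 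In the complementary case $K-t\ge N$, set $\alpha_{\mathrm{full}}$ from the memory equation (where $M_{\vec{d},\breve{\vec{\mathcal{P}}},k}=0$); the rate inequality then reduces, after routine algebra, to $(K-t-N)(K-t)(N-N^*)\ge 0$, which holds since all three factors are nonnegative.

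The main obstacle is this algebraic verification in the Tian-Chen case: a single coefficient $\alpha_{\mathrm{full}}$ (depending on the demand only through $N^*$) must satisfy the memory and the rate constraints simultaneously. Compatibility hinges on both $R_{\vec{d},\vec{\mathcal{P}}_{\mathrm{full}}}$ and $M_{\vec{d},\vec{\mathcal{P}}_{\mathrm{full}},k}$ being affine in $N^*$ with slopes tied together by the two binomial identities above, and on splitting the analysis at $K-t=N$ to handle the $\min(K-t,\tilde{N})$ appearing in $R_{\vec{d},\breve{\vec{\mathcal{P}}}}$ and $M_{\vec{d},\breve{\vec{\mathcal{P}}},k}$.
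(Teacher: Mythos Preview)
Your proposal is correct and follows essentially the same route as the paper: for the Yu points you put all mass on the undecomposed pattern and obtain $R_{\vec{d}}=\binom{K}{t+1}-\binom{K-N^*}{t+1}$ with $M_{\vec{d},k}=N\binom{K-1}{t-1}$, exactly as the paper does; for the Tian--Chen points you mix the fully-decomposed pattern with the special uncoded pattern $\breve{\vec{\mathcal{P}}}$, derive the same clean identities $R_{\mathrm{full}}=N^*\binom{K-1}{t}$ and $M_{\mathrm{full}}=N\binom{K-1}{t-1}-(N^*-1)\binom{K-2}{t-1}$, and in the subcase $K-t<N$ your weight $\alpha_{\mathrm{full}}=(K-N)/(K-N^*)$ is precisely the paper's $1-\alpha_{\breve{\vec{\mathcal{P}}}}$.

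The only deviation is in the complementary subcase $K-t\ge N$: the paper chooses $\alpha_{\breve{\vec{\mathcal{P}}}}=\frac{(K-t)(N-N^*)}{K(N-N^*)+tN^*}$ so that the \emph{rate} constraint is met with equality and then verifies the memory inequality, whereas you solve the \emph{memory} constraint for $\alpha_{\mathrm{full}}$ (using $M_{\breve{\vec{\mathcal{P}}}}=0$) and verify the rate inequality. Both choices are valid since either constraint can be the binding one; your reduction of the rate check to $(K-t-N)(K-t)(N-N^*)\ge 0$ is a clean way to close the argument.
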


Since $\mathcal{R}^{(t)}$ is a polytope constrained by the conditions in (\ref{eqn:LP1})-(\ref{eqn:LP5}), $\mathbf{cl}\left(\cup_{t=0,\ldots,K}\mathcal{R}^{(t)}\right)$ is also a polytope. Using standard technique \cite{Boydbook}, $\mathbf{cl}\left(\cup_{t=0,\ldots,K}\mathcal{R}^{(t)}\right)$ can be conveniently written as a region constrained by only linear constraints, and thus its boundary can be efficiently computed using linear programming.

To illustrate Theorem \ref{theorem:main}, we show that for the case $(N,K)=(3,4)$, the aforementioned new memory-rate pair $(\frac{4}{3},\frac{5}{6})$ is indeed in the region $\mathcal{R}^{(2)}$. For this purpose, we need to find a set of  $\{\alpha_{\vec{d},\vec{\mathcal{P}}^{(t)}_{\vec{d}}}\}$ such that the conditions in (\ref{eqn:LP1})-(\ref{eqn:LP5}) hold for each $\vec{d}\in\mathcal{D}$.
\begin{itemize}
\item For $\vec{d}=(A,A,B,C)=(1,1,2,3)$, let $\alpha=1$ for the decomposition pattern $\vec{\mathcal{P}}^{(2)}_{(1,1,2,3)}$
\begin{gather} 
\mathcal{P}_{(2,1,0),(1,1,2,3)}=\{\{1,2\}\}=\{\{A,B\}\},\nonumber\\
\mathcal{P}_{(2,0,1),(1,1,2,3)}=\{\{1,3\}\}=\{\{A,C\}\},\nonumber\\
\mathcal{P}_{(1,1,1),(1,1,2,3)}=\{\{1\},\{2,3\}\}=\{\{A\},\{B,C\}\},
\end{gather}
which is exactly the decomposition pattern used for (\ref{eqn:partialdecompose}). It can be verified that here 
\begin{align}
R_{\vec{d},\vec{\mathcal{P}}^{(t)}_{\vec{d}}}=1+1+[(2-1)+(1*2)]=5
\end{align}
using (\ref{eqn:Rs}), and 
\begin{align}
M_{\vec{d},\vec{\mathcal{P}}^{(t)}_{\vec{d}},1}=9-1=8,
\end{align}
where the only nonzero term comes from the transmission type $(1,1,1)$ and partition $\mathscr{P}=\{2,3\}=\{B,C\}$ in  (\ref{eqn:DeltaM}). It can be verified similarly that $M_{\vec{d},\vec{\mathcal{P}}^{(t)}_{\vec{d}},k}=8$ for $k=2,3,4$. 
\item For $\vec{d}=(A,A,B,B)=(1,1,2,2)$, two decomposition patterns are used: the first is the one without any decomposition, and the second is  
\begin{gather} 
\mathcal{P}_{(2,1,0),(1,1,2,2)}=\{\{1\},\{2\}\}=\{\{A\},\{B\}\},\nonumber\\
\mathcal{P}_{(1,2,0),(1,1,2,2)}=\{\{1\},\{2\}\}=\{\{A\},\{B\}\}.
\end{gather}
Note that this suggests a different coding approach than that used in the example of Section \ref{sec:critical}: the existence of two decomposition patterns implies that we can achieve this memory-rate pair by coding across two instances, using the two decomposition patterns given above. It is clear that for the first pattern
\begin{gather}
M_{\vec{d},\vec{\mathcal{P}}^{(t)}_{\vec{d}},k}=9,\, k=1,2,3,4,\quad \mbox{ and } R_{\vec{d},\vec{\mathcal{P}}^{(t)}_{\vec{d}}}=4,
\end{gather}
and it can be verified that for the second pattern
\begin{gather}
M_{\vec{d},\vec{\mathcal{P}}^{(t)}_{\vec{d}},k}=7,\, k=1,2,3,4,\quad \mbox{ and } R_{\vec{d},\vec{\mathcal{P}}^{(t)}_{\vec{d}}}=6.
\end{gather}
It is clear that choosing $\alpha=0.5$ for both patterns satisfies the conditions (\ref{eqn:LP1})-(\ref{eqn:LP5}).
\item For $\vec{d}=(A,A,A,C)=(1,1,1,3)$, again two decomposition patterns are used: the first is the one without any decomposition, and the other is 
\begin{gather} 
\mathcal{P}_{(2,0,1),(1,1,1,3)}=\{\{1\},\{3\}\}=\{\{A\},\{C\}\},\nonumber\\
\mathcal{P}_{(3,0,0),(1,1,1,3)}=\{\{1\}\}=\{\{A\}\}.
\end{gather}
This case is similar to the previous one, and the parameter $\alpha$ can also be chosen to be $0.5$ each.
\item For $\vec{d}=(A,A,A,A)=(1,1,1,1)$, two decomposition patterns are used: the first is the one without any decomposition, and the second is the special uncoded transmission. For the first pattern
\begin{gather}
M_{\vec{d},\vec{\mathcal{P}}^{(t)}_{\vec{d}},k}=9,\, k=1,2,3,4,\quad \mbox{ and } R_{\vec{d},\vec{\mathcal{P}}^{(t)}_{\vec{d}}}=3,
\end{gather}
and for the second pattern
\begin{gather}
M_{\vec{d},\breve{\vec{\mathcal{P}}}^{(t)}_{\vec{d}},k}=3,\, k=1,2,3,4,\quad \mbox{ and } R_{\vec{d},\vec{\vec{\mathcal{P}}}^{(t)}_{\vec{d}}}=12.
\end{gather}
We can choose $\alpha=\frac{5}{6}$ for the first pattern and the conditions (\ref{eqn:LP1})-(\ref{eqn:LP5}) indeed hold.
\end{itemize}

\section{The New Coding Scheme}
\label{sec:newcode}

We first give the prefetching strategy and the delivery strategy. The correctness of the code is then proved, which establishes Theorem \ref{theorem:main}.
\subsection{The Prefetching Strategy}

The prefetching strategy is in fact rather straightforward, which is to encode the symbols allocated to a user using a rank metric code to produce the linear combinations. However, since we allow coding across multiple instances, a technical issue arises as what are the proportions of different delivery patterns. These values are needed to determine two parameters: the total number of instances to code across, and the total number of coded symbols to cache. To address this technical issue, we consider the following line of argument.

Suppose a memory-rate tradeoff pair $(M,R)\in\mathcal{R}^{(t)}$. The definition of $\mathcal{R}^{(t)}$ implies that there exists a set of $\{\alpha_{\vec{d},\vec{\mathcal{P}}^{(t)}_{\vec{d}}}\}$ for which the conditions in (\ref{eqn:LP1})-(\ref{eqn:LP5}) hold. Let us assume that a positive integer $r$ is chosen such that there exists a set of non-negative integers $\{r_{\vec{d},\vec{\mathcal{P}}^{(t)}_{\vec{d}}}\}$ 
\begin{align}
\left|\frac{r_{\vec{d},\vec{\mathcal{P}}^{(t)}_{\vec{d}}}}{r}-\alpha_{\vec{d},\vec{\mathcal{P}}^{(t)}_{\vec{d}}}\right|\leq \epsilon.
\end{align}
Clearly $\epsilon$ can be arbitrarily small by choosing $r$ sufficiently large. Essentially, during the delivery phase, for each demand type $\vec{d}$, within the total of $r$ instances that are being coded across, we will use the decomposition pattern $\vec{\mathcal{P}}^{(t)}_{\vec{d}}$ on $r_{\vec{d},\vec{\mathcal{P}}^{(t)}_{\vec{d}}}$ of them during delivery. 

Let us now fix $r$ and $\{r_{\vec{d},\vec{\mathcal{P}}^{(t)}_{\vec{d}}}\}$. For $\vec{d}\in{\mathcal{D}}$, define the memory-rate pair
\begin{align}
(M'_{\vec{d}},R'_{\vec{d}})\triangleq\frac{1}{r{K \choose t}}\left(\max_{k\in I_K}\sum_{\vec{\mathcal{P}}^{(t)}_{\vec{d}}} r_{\vec{d},\vec{\mathcal{P}}^{(t)}_{\vec{d}}} M_{\vec{d},\vec{\mathcal{P}}^{(t)}_{\vec{d}},k},\sum_{\vec{\mathcal{P}}^{(t)}_{\vec{d}}} r_{\vec{d},\vec{\mathcal{P}}^{(t)}_{\vec{d}}} R_{\vec{d},\vec{\mathcal{P}}^{(t)}_{\vec{d}}}\right),
\end{align}
Let us also define $M'_r\triangleq \max_{\vec{d}\in {\mathcal{D}}}M'_{\vec{d}}$ and $R'_r\triangleq \max_{\vec{d}\in{\mathcal{D}}}R'_{\vec{d}}$, which will be the effective memory-rate pair of this code. 

The key design constraint is that the prefetching strategy needs to be independent of the demand vector, which we describe next. In the proposed code, each file contains $r{K \choose t}$ symbols.  Each symbol is thus denoted as $W^{(i)}_{n,\mathcal{S}}$, where $i\in I_r$, $n\in I_N$ and $\mathcal{S}\subseteq I_K$ with $|\mathcal{S}|=t$, is assumed to be a symbol in $\mathbb{F}_{2^m}$ for some sufficiently large $m$ to be specified shortly. Each file symbol (segment) will be provided to $t$ users as indicated by $\mathcal{S}$, to be stored as a component of some linear combinations. There are a total of 
\begin{align*}
P\triangleq rN{K-1\choose t-1}
\end{align*}
symbols allocated to each user, however, only $P_o-P$ linear combinations of them are stored in the cache, and the parameter $P_o$ is directly related to the  normalized memory $M'_r$ as
\begin{align}
P_o-P=r M'_r{K \choose t}.
\end{align}
Note that $P_o$ is always an integer.  
A $(P_o,P)$ systematic rank metric code is then used to encode the $P$ symbols at each user, and the $P_o-P$ parities of this code are placed at each user's cache. For such a rank metric code to exist, $m\geq P_o$ suffices. 

Our plan next is to show that for each $\vec{d}\in {\mathcal{D}}$, a valid delivery strategy exists with a delivery rate $R'_{\vec{d}}$. Then by making the integer $r$ sufficiently large, and choosing the integers $\{r_{\vec{d},\vec{\mathcal{P}}^{(t)}_{\vec{d}}}\}$ appropriately such that $\epsilon\geq 0$ is made arbitrarily small, we have
\begin{align}
\lim_{r\rightarrow\infty} (M'_r,R'_r)=\frac{1}{{K \choose t}}\left(\max_{\vec{d}\in{\mathcal{D}}}\max_{k\in I_K}\sum_{\vec{\mathcal{P}}^{(t)}_{\vec{d}}} \alpha_{\vec{d},\vec{\mathcal{P}}^{(t)}_{\vec{d}}} M_{\vec{d},\vec{\mathcal{P}}^{(t)}_{\vec{d}},k},\max_{\vec{d}\in{\mathcal{D}}}\sum_{\vec{\mathcal{P}}^{(t)}_{\vec{d}}} \alpha_{\vec{d},\vec{\mathcal{P}}^{(t)}_{\vec{d}}} R_{\vec{d},\vec{\mathcal{P}}^{(t)}_{\vec{d}}}\right) \preceq (M,R).
\end{align}
This would prove that the targeted $(M,R)$ is indeed achievable. 

\subsection{The Delivery Strategy}
\label{sec:delivery}

Consider any demand vector $\vec{d}\in \mathcal{D}$, and recall the parameters $\{r_{\vec{d},\vec{\mathcal{P}}^{(t)}_{\vec{d}}}\}$ have been chosen. For convenience, suppose there are a total of $q$ possible decomposition patterns for the demand vector $\vec{d}$, with the first one as 
$\vec{\mathcal{P}}^{(t)}_{\vec{d},1}=\breve{\vec{\mathcal{P}}}^{(t)}_{\vec{d}}$ which is the special case associated with the uncoded transmissions, and $\vec{\mathcal{P}}^{(t)}_{\vec{d},j}$, for $j=2,3,\ldots,q$ other decomposition patterns. For a specific transmission type $\vec{t}$,  the corresponding decomposition in $\vec{\mathcal{P}}^{(t)}_{\vec{d},j}$ is written as $\mathcal{P}_{\vec{t},\vec{d},j}$. Note that $\sum_{j=1}^q r_{\vec{d},\vec{\mathcal{P}}^{(t)}_{\vec{d},j}}=r$. 
For the demand vector $\vec{d}$, the transmissions in the proposed scheme are as given in Algorithm \ref{alg:trans}.
\begin{figure}
\begin{algorithm}[H]
\caption{The delivery strategy}
\label{alg:trans}
\KwIn{$t$, $\vec{d}$, $\{r_{\vec{d},\vec{\mathcal{P}}^{(t)}_{\vec{d}}}\}$, and $\{W^{(i)}_{n,\mathcal{S}}\}$}
Compute $\mathcal{\vec{m}}$, $\mathbb{supp}(\mathcal{\vec{m}})$, and $N^*$ from $\vec{d}$.\\
\eIf{$K-t\leq N^*-1$}{
	\For{$\mathcal{S}\subseteq I_K$: $|\mathcal{S}|=t$}{
		Find a set $\mathcal{A}^*\subseteq \mathbb{supp}(\mathcal{\vec{m}})$ such that $\cup_{n\in\mathcal{A}^*}I^{[n]}\subseteq \mathcal{S}$ and $|\mathcal{A}^*|=N^*-K+t$\\
		\For{$i=1$ to $r_{\vec{d},\breve{\vec{\mathcal{P}}}^{(t)}_{\vec{d}}}$}{	
			Transmit $W^{(i)}_{n,\mathcal{S}}$, all $n\in\mathbb{supp}(\mathcal{\vec{m}})\setminus \mathcal{A}^*$.
		}
	}
}{
	Choose a set $\mathcal{A}\subseteq I_N$, such that $|\mathcal{A}|=\min(K-t,\tilde{N})$ and $\mathbb{supp}(\mathcal{\vec{m}})\subseteq \mathcal{A}$\\
	\For{$n\in\mathcal{A}$}{	
		\For{$i=1$ to $r_{\vec{d},\breve{\vec{\mathcal{P}}}^{(t)}_{\vec{d}}}$}{		
			Transmit $W^{(i)}_{n,\mathcal{S}}$, all $\mathcal{S}\subseteq I_K$ such that $\mathcal{S}=t$.
		}
	}
}
\For{$j=2$ to $q$}{
	\For{$\vec{t}\in \mathcal{T}_{\vec{d}}^{(t)}$}{
		\For{$\mathscr{P} \in \mathcal{P}_{\vec{t},\vec{d},j}$}{    
			\For{ $\mathcal{B}: \mathcal{T}(\mathcal{B})=\vec{t}$, $\left(\bigcup_{n\in \mathscr{P}} \{\ell^{[n]}\}\right)\cap \mathcal{B}\neq \emptyset$}{
				\For{$i=\sum_{k=1}^{j-1}r_{\vec{d},\vec{\mathcal{P}}^{(t)}_{\vec{d}},k} +1$ to $\sum_{k=1}^{j}r_{\vec{d},\vec{\mathcal{P}}^{(t)}_{\vec{d}},k}$}{				
					Transmit $\oplus_{n\in\mathscr{P}}\left(\oplus_{k\in \mathcal{B}\cap I^{[n]}} W^{(i)}_{n,\mathcal{B}\setminus k}\right)$;
			       }
			}      
		}
	}
}
\end{algorithm}
\end{figure}

The transmissions on line-6 and line-13 in Algorithm \ref{alg:trans} are uncoded, which stem from the special transmission pattern $\breve{\vec{\mathcal{P}}}^{(t)}_{\vec{d}}$. We first need to show that the steps on line-4 and line-10 are valid,  {\em i.e.,} such a set $\mathcal{A}^*$ or $\mathcal{A}$ can always be found. The latter case is immediate by observing that in this case $K-t\geq N^*$, and thus $N^*\leq \min(K-t,\tilde{N})\leq \tilde{N}$, and we can always find a set $\mathcal{A}$ such that $\mathbb{supp}(\vec{m})\subseteq\mathcal{A}\subseteq I_N$. To see that the step on line-4 is also valid, first observe that in this case $K-t\leq N^*-1$, and we need to find a set of files $\mathcal{A}^*$, such that the given set of users $\mathcal{S}$ (where $|\mathcal{S}|=t$) includes all the users that request files in $\mathcal{A}^*$. Suppose we cannot find such a set, this means that there are less than $N^*-K+t$ such files, or more than $N^*-(N^*-K+t)=K-t$ files that are being requests by some users not in $\mathcal{S}$, but this is impossible, since there are only $K-t$ users not in the set $\mathcal{S}$. Thus the supposition is not true, and we can always find such a set $\mathcal{A}^*$. It is straightforward to count the total number of transmissions as $r_{\vec{d},\breve{\vec{\mathcal{P}}}^{(t)}_{\vec{d}}}\min(K-r,\tilde{N}){K \choose t}$, when Algorithm \ref{alg:trans} completes line-16. 

The transmissions on line-22 in Algorithm \ref{alg:trans} have the following property: at least one of the component $W_{n,\mathcal{B}\setminus k}$ (for any fixed superscript $^{(i)}$) in the transmission must have $k$ that is a leader. It is a simple combinatorial counting task to show that the total number of transmissions in this part of the algorithm is given by $\sum_{j=2}^q r_{\vec{d},\vec{\mathcal{P}}^{(t)}_{\vec{d},j}}R_{\vec{d},{\vec{\mathcal{P}}}^{(t)}_{\vec{d},j}}$. Essentially we examine all the transmission types, for which the decomposition pattern for the transmission type follows $\vec{\mathcal{P}}^{(t)}_{\vec{d},j}$ in the corresponding instances (indexed by the superscript $^{(i)}$), then count the transmitted linear combinations associated with each partition in this decomposition. We must eliminate the transmissions where no leader in this partition $\mathscr{P}$ is included, which is indeed accounted for as the term $-\prod_{n \in \mathscr{P}} \binom{m_n-1}{t_n}$ in $R_{\vec{d},{\vec{\mathcal{P}}}^{(t)}_{\vec{d},j}}$. Thus after the algorithm runs to completion, a total of $\sum_{j=1}^q r_{\vec{d},\vec{\mathcal{P}}^{(t)}_{\vec{d},j}}R_{\vec{d},{\vec{\mathcal{P}}}^{(t)}_{\vec{d},j}}$ symbols are transmitted. 

The transmissions on line-22 in Algorithm \ref{alg:trans} are a subset of the decomposed transmissions given in (\ref{eqn:directdecomp}) for the patterns $\vec{\mathcal{P}}^{(t)}_{\vec{d},j}$,  since transmissions without any leader are not allowed as mentioned early. This removes the redundancy in the transmissions after a native decomposition. The precise linear independence relations can be captured in a set of lemmas given in the sequel.

\subsection{Three Auxiliary Lemmas} 

When stating these lemmas, we omit the superscript $^{(i)}$ which is used to index the code instances that are being coded across, as well as the decomposition pattern index $j=2,3,\ldots,q$, since they are irrelevant in these settings. We shall return to this notation later on when it becomes important. 

\begin{lemma}[Redundancy Reduction Lemma]
Fix a demand vector $\vec{d}$ and a valid transmission type $\vec{t}$. Designate a subset $\mathscr{P}\subseteq\mathbb{supp}(\vec{t})$ as the variable set, and $\bar{\mathscr{P}}=\mathbb{supp}(\vec{t})\setminus \mathscr{P}$ as the fixed set. Further fix an arbitrary subset $\mathcal{A}\subseteq \cup_{n\in \bar{\mathscr{P}}} I^{[n]}$ such that $\mathcal{A}\cap I^{[n]}=t_n$ for all $n\in \bar{\mathscr{P}}$. Let $\mathcal{L}\triangleq \cup_{n\in \mathscr{P}}\{\ell^{[n]}\}$ be the leader set. Let $\mathcal{Q}_n\subseteq I^{[n]}\setminus \ell^{[n]}$ be any subset such that $|\mathcal{Q}|=t_n$, and let $\mathcal{Q}\triangleq \cup_{n\in\mathscr{P}} \mathcal{Q}_n$. 
The following equation holds
\begin{align}
\oplus_{\substack{\mathcal{V}\subseteq \mathcal{Q}\cup\mathcal{L}: \\|\mathcal{V}\cap I^{[n]}|=t_n,\\
\forall n\in \mathscr{P}}}\oplus_{k\in\mathcal{V}}W_{d_k,\mathcal{V}\cup \mathcal{A}\setminus\{k\}}=0.\label{eqn:rrl}
\end{align}
\label{lemma:redundancyreduction}
\end{lemma}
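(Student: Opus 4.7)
The plan is to prove (\ref{eqn:rrl}) by showing that in the binary field every atomic file segment $W_{n^*,\mathcal{S}}$ appearing on the left-hand side is produced by an even number of index pairs $(\mathcal{V},k)$ in the double XOR, in fact by either zero or exactly two such pairs. Since the addition is in $\mathbb{F}_2$, this immediately yields the claimed identity.

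First I would set up notation: for each $n\in\mathscr{P}$ write $\mathcal{T}_n \triangleq \mathcal{Q}_n\cup\{\ell^{[n]}\}$, so $|\mathcal{T}_n|=t_n+1$, and the $\mathcal{T}_n$ are pairwise disjoint subsets of $I^{[n]}$ whose union is $\mathcal{Q}\cup\mathcal{L}$. The admissible outer sets $\mathcal{V}$ are then exactly those of the form $\bigcup_{n\in\mathscr{P}}\mathcal{V}_n$ with $\mathcal{V}_n$ a $t_n$-subset of $\mathcal{T}_n$; equivalently, $\mathcal{V}$ is obtained by removing exactly one element from each $\mathcal{T}_n$. Note also that every inner term $W_{d_k,\mathcal{V}\cup\mathcal{A}\setminus\{k\}}$ must have $d_k\in\mathscr{P}$, because $k\in\mathcal{V}\subseteq\bigcup_{n\in\mathscr{P}}I^{[n]}$, which is disjoint from $\mathcal{A}\subseteq\bigcup_{n\in\bar{\mathscr{P}}}I^{[n]}$.

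The main step is to invert the relation $\mathcal{S}=\mathcal{V}\cup\mathcal{A}\setminus\{k\}$ with an atomic term $W_{n^*,\mathcal{S}}$ fixed for some $n^*\in\mathscr{P}$. Since $k\in I^{[n^*]}$ while $\mathcal{A}\cap I^{[n^*]}=\emptyset$, the equation forces $k\notin\mathcal{A}$, $\mathcal{A}\subseteq\mathcal{S}$, and $\mathcal{V}=(\mathcal{S}\setminus\mathcal{A})\cup\{k\}$ with $k\notin\mathcal{S}$. Requiring $\mathcal{V}$ to be admissible then imposes the structural conditions $\mathcal{S}\cap I^{[n]}\subseteq\mathcal{T}_n$ with $|\mathcal{S}\cap I^{[n]}|=t_n$ for every $n\in\mathscr{P}\setminus\{n^*\}$, and $\mathcal{S}\cap I^{[n^*]}\subseteq\mathcal{T}_{n^*}$ with $|\mathcal{S}\cap I^{[n^*]}|=t_{n^*}-1$. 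When these conditions hold, the only remaining freedom is $k\in\mathcal{T}_{n^*}\setminus(\mathcal{S}\cap I^{[n^*]})$, a set of cardinality $(t_{n^*}+1)-(t_{n^*}-1)=2$; when they fail, there is no valid $(\mathcal{V},k)$.

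Hence every atomic segment $W_{n^*,\mathcal{S}}$ contributes either $0$ or $2$ times to the double XOR and cancels in $\mathbb{F}_2$, establishing (\ref{eqn:rrl}). I do not anticipate a conceptual obstacle; the delicate part is purely bookkeeping, namely verifying that the two candidate values of $k$ are each consistent with the definition of $\mathcal{V}$ and yield distinct index pairs (which is automatic since different $k$ produce different $\mathcal{V}\cap I^{[n^*]}$), and that no atomic segment with file index outside $\mathscr{P}$ ever arises (which is guaranteed by $\mathcal{V}\subseteq\bigcup_{n\in\mathscr{P}}I^{[n]}$).
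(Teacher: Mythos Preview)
Your proposal is correct and follows essentially the same parity-counting approach as the paper: both arguments establish that every file segment $W_{n^*,\mathcal{S}}$ occurring in the double XOR appears exactly twice and hence cancels in $\mathbb{F}_2$. The only difference is organizational---you fix the target segment and count the preimage pairs $(\mathcal{V},k)$ directly, whereas the paper first groups the sum by the file index $d_\ell$, splits $\mathcal{V}$ into its $I^{[d_\ell]}$-part and the complement, and then observes that the resulting inner sum is symmetric with each term appearing twice; your inversion argument is arguably the more transparent packaging of the same idea.
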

\begin{proof}
Observe that
\begin{align}
LHS&=
\oplus_{\ell\in \mathcal{L}}\oplus_{\substack{\mathcal{V}\subseteq \mathcal{Q}\cup\mathcal{L}: \\|\mathcal{V}\cap I^{[n]}|=t_n,\\
\forall n\in \mathscr{P}}}\oplus_{k\in\mathcal{V}\cap I^{[\ell]}}W_{d_{\ell},\mathcal{V}\cup \mathcal{A}\setminus\{k\}}
\end{align}
Consider any fixed $\ell\in \mathcal{L}$, and enumerate all set $\mathcal{V}$ by parts $\mathcal{V}\triangleq(\hat{\mathcal{V}},\tilde{\mathcal{V}})$
\begin{align}
&\oplus_{\substack{\mathcal{V}\subseteq \mathcal{Q}\cup\mathcal{L}: \\|\mathcal{V}\cap I^{[n]}|=t_n,\\
\forall n\in \mathscr{P}}}\oplus_{k \in \mathcal{V}\cap I^{[\ell]}}W_{d_\ell,\mathcal{V}\cup \mathcal{A}\setminus\{k\}}\nonumber\\
&\quad=\oplus_{\substack{\hat{\mathcal{V}}\subseteq (\mathcal{Q}\setminus \mathcal{Q}_{d_\ell})\cup(\mathcal{L}\setminus\{\ell\}):\\
|\mathcal{V}\cap I^{[n]}|=t_n,\\
\forall n\in \mathscr{P}\setminus\{d_\ell\}}}\left(\oplus_{\substack{\tilde{\mathcal{V}}\subseteq \mathcal{Q}_{d_\ell}\cup \{\ell\}: \\|\tilde{\mathcal{V}}|=t_{d_{\ell}} }}\left(\oplus_{k\in \tilde{\mathcal{V}}}W_{d_\ell,(\hat{\mathcal{V}}\cup \mathcal{A})\cup(\tilde{\mathcal{V}}\setminus\{k\})}\right)\right).
\end{align}
Now consider a fixed $\hat{\mathcal{V}}$, and consider the inner summation
\begin{align}
\label{eqn:summationsymmetric}
\oplus_{\substack{\tilde{\mathcal{V}}\subseteq \mathcal{Q}_{d_\ell}\cup \{\ell\}: \\|\tilde{\mathcal{V}}|=t_{d_{\ell}} }}\left(\oplus_{k\in \tilde{\mathcal{V}}}W_{d_\ell,(\hat{\mathcal{V}}\cup \mathcal{A})\cup(\tilde{\mathcal{V}}\setminus\{k\})}\right).
\end{align}
This is a summation of $(t_{d_\ell}+1)t_{d_\ell}$ file symbols, each of which is in the form $W_{d_\ell,(\hat{\mathcal{V}}\cup \mathcal{A})\cup \dot{\mathcal{V}}}$, where $\dot{\mathcal{V}}$ is a subset of $\mathcal{Q}_{d_\ell}\cup \{\ell\}$ such that $|\dot{\mathcal{V}}|=t_{d_\ell}-1$. Since $|\mathcal{Q}_{d_\ell}\cup \{\ell\}|=t_{d_\ell}+1$ and the summation form in (\ref{eqn:summationsymmetric}) is symmetric, each file symbol appears exactly twice, which cancel out each other in this binary (extension) field. The proof is thus complete.
\end{proof}

The above lemma can be used to show that the decomposed transmissions without any leaders are redundant. To see this, notice that (\ref{eqn:rrl}) can be rewritten as
\begin{align}
\left(\oplus_{\substack{\mathcal{V}\subseteq \mathcal{Q}\cup\mathcal{L}: \mathcal{V}\neq \mathcal{Q}
\\|\mathcal{V}\cap I^{[n]}|=t_n, \forall n\in \mathscr{P}}}\oplus_{k\in\mathcal{V}}W_{d_k,\mathcal{V}\cup \mathcal{A}\setminus\{k\}}\right)\oplus\left(\oplus_{k\in\mathcal{Q}}W_{d_k,\mathcal{Q}\cup \mathcal{A}\setminus\{k\}}\right)=
0
\end{align}
Clearly the summation in the second bracket, which is one of decomposed parts from (\ref{eqn:directdecomp}) without any leaders, can be expressed as a linear combination of those in the first bracket, which all have some leaders and are indeed in the delivery transmissions given on line-22 in Algorithm \ref{alg:trans}. Conversely, the transmissions obtained by directly decomposing those in the delivery transmissions of \cite{MaddahAliNiesen:14} can be reconstructed using the transmissions given on line-22 in Algorithm \ref{alg:trans}. 
Lemma \ref{lemma:redundancyreduction} is a generalized version of a similar lemma in \cite{Yu:16}, which was used to remove the redundancy in the coding scheme given in \cite{MaddahAliNiesen:14}. 

The next two lemmas essentially state that there is no further linear redundancy in the transmissions in line-22 of Algorithm \ref{alg:trans} to be removed. In order to state the lemmas, the following definition is needed. For any fixed $\vec{d}$, $\vec{t}$, $\mathscr{P}\in \mathcal{P}_{\vec{t},\vec{d}}$, and $\mathcal{A}\subseteq \cup_{n\in \mathbb{supp}(\vec{t})\setminus \mathscr{P}}I^{[n]}$ for which $\mathcal{A}\cap I^{[n]}=t_n$ for all $n\in \mathbb{supp}(\vec{t})\setminus \mathscr{P}$, let
\begin{align}
\mathcal{W}_{\vec{d},\vec{t},\mathscr{P},\mathcal{A}}\triangleq \bigcup_{\substack{\mathcal{B}\subseteq \cup_{n\in \mathscr{P}}I^{[n]}:\\
\mathcal{B}\cap I^{[n]}=t_n,\\\forall n\in \mathscr{P}}}\left\{W_{d_k,\mathcal{A}\cup\mathcal{B}\setminus k}: d_k\in\mathscr{P}\right\}. \label{eqn:Wset}
\end{align}
The next lemma states that the decomposed transmissions can in fact be separated naturally into mutually exclusive groups.

\begin{lemma}
For any $\vec{d}$ and $\vec{\mathcal{P}}^{(t)}_{\vec{d}}$, and any $(\vec{t'},\mathscr{P}',\mathcal{A}')\neq (\vec{t''},\mathscr{P}'',\mathcal{A}'')$, where $\mathscr{P}'\in \mathcal{P}_{\vec{t'},\vec{d}}$ and $\mathscr{P}''\in \mathcal{P}_{\vec{t''},\vec{d}}$, we have 
$\mathcal{W}_{\vec{d},\vec{t'},\mathscr{P}',\mathcal{A}'}\cap \mathcal{W}_{\vec{d},\vec{t''},\mathscr{P}'',\mathcal{A}''}=\emptyset$.
\label{lemma:separation}
\end{lemma}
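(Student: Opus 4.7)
The plan is to show that any single file symbol $W_{n,\mathcal{S}}$ that lies in some set $\mathcal{W}_{\vec{d},\vec{t},\mathscr{P},\mathcal{A}}$ uniquely determines the triple $(\vec{t},\mathscr{P},\mathcal{A})$. Once this is established, two sets $\mathcal{W}_{\vec{d},\vec{t'},\mathscr{P}',\mathcal{A}'}$ and $\mathcal{W}_{\vec{d},\vec{t''},\mathscr{P}'',\mathcal{A}''}$ cannot share any element unless the two triples coincide, which proves the lemma by contrapositive.

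First I would recover the transmission type $\vec{t}$ from the symbol. By the definition in (\ref{eqn:Wset}), any element of $\mathcal{W}_{\vec{d},\vec{t},\mathscr{P},\mathcal{A}}$ has the form $W_{d_k,\mathcal{A}\cup\mathcal{B}\setminus\{k\}}$ with $d_k\in\mathscr{P}$, hence $k\in\mathcal{B}\subseteq\cup_{n'\in\mathscr{P}}I^{[n']}$. Thus if the symbol is $W_{n,\mathcal{S}}$, then $n=d_k\in\mathscr{P}$ and $k\in I^{[n]}$, while $\mathcal{S}\cup\{k\}=\mathcal{A}\cup\mathcal{B}$. Counting users in $\mathcal{A}\cup\mathcal{B}$ by demanded file gives $t_{n'}=|(\mathcal{A}\cup\mathcal{B})\cap I^{[n']}|=|\mathcal{S}\cap I^{[n']}|+\mathbb{1}[n'=n]$ for every $n'$, because $I^{[n']}$ partitions the users according to demand. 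The right-hand side depends only on $n$, $\mathcal{S}$ and the (fixed) demand vector $\vec{d}$, so $\vec{t}$ is uniquely determined by $W_{n,\mathcal{S}}$. In particular, $\vec{t'}=\vec{t''}$.

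Next I would recover $\mathscr{P}$. Since $\vec{t'}=\vec{t''}=:\vec{t}$, the partition $\mathcal{P}_{\vec{t},\vec{d}}$ is a single fixed partition of $\mathbb{supp}(\vec{t})$. Both $\mathscr{P}'$ and $\mathscr{P}''$ are blocks of this partition, and by the previous step $n\in\mathscr{P}'\cap\mathscr{P}''$. Since the blocks of a partition are pairwise disjoint, $\mathscr{P}'=\mathscr{P}''=:\mathscr{P}$.

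Finally I would recover $\mathcal{A}$. Since $\mathcal{A}\subseteq\cup_{n'\in\bar{\mathscr{P}}}I^{[n']}$ while $k\in I^{[n]}\subseteq\cup_{n'\in\mathscr{P}}I^{[n']}$, we have $k\notin\mathcal{A}$, so $\mathcal{A}\subseteq\mathcal{S}$; and $\mathcal{A}$ consists precisely of the users in $\mathcal{A}\cup\mathcal{B}$ demanding files outside $\mathscr{P}$. Consequently,
\begin{equation*}
\mathcal{A}=\mathcal{S}\cap\Big(\bigcup_{n'\in\mathbb{supp}(\vec{t})\setminus\mathscr{P}}I^{[n']}\Big),
\end{equation*}
which is determined by $\mathcal{S}$, $\mathscr{P}$, $\vec{t}$ and $\vec{d}$. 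Hence $\mathcal{A}'=\mathcal{A}''$, completing the argument. I do not anticipate a real obstacle here; the work is simply in verifying carefully that the user $k$ being excluded always comes from $\mathcal{B}$ (not $\mathcal{A}$), which pins down the role of each coordinate in $\mathcal{S}\cup\{k\}$ from the demand vector alone.
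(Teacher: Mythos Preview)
Your proposal is correct and follows essentially the same argument as the paper's proof: both assume a common element $W_{n,\mathcal{S}}$ and recover successively $\vec{t}$ (by counting, per file index, the users in $\mathcal{S}$ requesting that file and adding one at coordinate $n$), then $\mathscr{P}$ (as the unique block of the partition $\mathcal{P}_{\vec{t},\vec{d}}$ containing $n$), and finally $\mathcal{A}$ (as $\mathcal{S}\cap\bigcup_{n'\in\mathbb{supp}(\vec{t})\setminus\mathscr{P}}I^{[n']}$), yielding the contradiction. Your write-up is a bit more explicit about why $k\notin\mathcal{A}$, but the route is the same.
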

\begin{proof}

Suppose that the two sets have a common element $W_{n,\mathcal{C}}$. Then this implies that, 
\begin{align}
&t'_n=t''_n=|\{\hat{k}\in\mathcal{C}: d_{\hat{k}}=n\}|+1,\nonumber\\
&t'_i=t''_i=|\{\hat{k}\in\mathcal{C}: d_{\hat{k}}=i\}|,\quad i\in I_N,\quad i\neq n.
\end{align}
{\em i.e., } $\vec{t'}=\vec{t''}$; let us write this transmission type as $\vec{t}$. It also follows that $\mathscr{P}'=\mathscr{P}''$, since $n\in\mathscr{P}'$ and $n\in\mathscr{P}''$, but $\mathscr{P}'\cap \mathscr{P}''=\emptyset$ if $\mathscr{P}'$ and $\mathscr{P}''$ are distinct; we can thus denote this partition as $\mathscr{P}$. It further follows that $\mathcal{A}'=\mathcal{A}''=\mathcal{C}\cap \cup_{n\in \mathbb{supp}(\vec{t})\setminus \mathscr{P}}I^{[n]}$. This is a contradiction, and thus there is no common element between the two sets. The proof is thus complete.
\end{proof}

\begin{lemma}\label{lemma:independenceLong}
Each transmission on line-22 in Algorithm \ref{alg:trans} is a linear combination of the elements in a single set $\mathcal{W}_{\vec{d},\vec{t},\mathscr{P},\mathcal{A}}$. All the linear combinations in the transmissions on line-22 of Algorithm \ref{alg:trans} using symbols in a single set $\mathcal{W}_{\vec{d},\vec{t},\mathscr{P},\mathcal{A}}$ are linearly independent. 
\end{lemma}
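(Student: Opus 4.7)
The lemma splits into two separate claims, and my plan is to handle them in sequence, with the independence part carrying the main technical weight.

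For the membership claim, I will fix the iteration parameters $(\vec{t},\mathscr{P},\mathcal{B})$ from line-22 and identify $\mathcal{A}\triangleq \mathcal{B}\cap \bigcup_{n\in \mathbb{supp}(\vec{t})\setminus \mathscr{P}}I^{[n]}$. A direct check gives $|\mathcal{A}\cap I^{[n]}|=t_n$ for every $n\in \mathbb{supp}(\vec{t})\setminus \mathscr{P}$; writing $\mathcal{B}'\triangleq \mathcal{B}\setminus \mathcal{A}$, which satisfies $|\mathcal{B}'\cap I^{[n]}|=t_n$ for every $n\in\mathscr{P}$, each summand in the transmission rewrites as $W_{d_k,\mathcal{A}\cup\mathcal{B}'\setminus k}$ with $d_k\in\mathscr{P}$, matching (\ref{eqn:Wset}) verbatim.

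For the independence claim, I will fix $(\vec{t},\mathscr{P},\mathcal{A})$ and write $T_{\mathcal{B}'}$ for the transmission indexed by $\mathcal{B}'$, where $\mathcal{B}'$ ranges over those subsets of $\bigcup_{n\in\mathscr{P}}I^{[n]}$ with $|\mathcal{B}'\cap I^{[n]}|=t_n$ for every $n\in\mathscr{P}$ and $\mathcal{B}'\cap \mathcal{L}\neq \emptyset$, with $\mathcal{L}\triangleq \bigcup_{n\in\mathscr{P}}\{\ell^{[n]}\}$. Suppose a binary linear combination $\bigoplus_{\mathcal{B}'} c_{\mathcal{B}'}T_{\mathcal{B}'}$ of these transmissions vanishes. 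The core step is to extract one coefficient identity per symbol $W_{m,\mathcal{A}\cup\mathcal{D}}\in \mathcal{W}_{\vec{d},\vec{t},\mathscr{P},\mathcal{A}}$: a short matching argument shows that this symbol appears in $T_{\mathcal{B}'}$ exactly when $\mathcal{B}'=\mathcal{D}\cup\{k\}$ with $k\in I^{[m]}\setminus(\mathcal{D}\cap I^{[m]})$, so the identity becomes $\bigoplus_{k} c_{\mathcal{D}\cup \{k\}}=0$, where the XOR is restricted to those $k$ making $\mathcal{D}\cup\{k\}$ leader-containing. I will then induct on the leader count $j\triangleq |\mathcal{B}'\cap \mathcal{L}|$: for $j=1$, choosing $(m,\mathcal{D})$ so that $\mathcal{D}\cap \mathcal{L}=\emptyset$ forces $k=\ell^{[m]}$ as the only valid term and yields $c_{\mathcal{D}\cup\{\ell^{[m]}\}}=0$, which sweeps through all $\mathcal{B}'$ with exactly one leader; for $j\geq 2$, picking a leader $\ell^{[m]}\in \mathcal{B}'$ and setting $\mathcal{D}=\mathcal{B}'\setminus\{\ell^{[m]}\}$, the terms with $k\neq \ell^{[m]}$ correspond to swapping $\ell^{[m]}$ out for a non-leader in $I^{[m]}$, strictly lowering the leader count to $j-1\geq 1$, so those coefficients vanish by the inductive hypothesis and only $c_{\mathcal{B}'}=0$ remains.

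The main obstacle will be the careful bookkeeping of the validity restriction inside each identity: the base case hinges on arranging $\mathcal{D}$ so that exactly one candidate $k$ is valid, while the inductive step relies on the observation that $I^{[m]}$ contains only a single element of $\mathcal{L}$, namely $\ell^{[m]}$, so swapping it out strictly drops the leader count by one and keeps us in the regime $j-1\geq 1$ covered by the hypothesis. Once these two counting points are pinned down, the remainder is essentially combinatorial bookkeeping, and together with Lemma~\ref{lemma:separation} it delivers the stated linear independence.
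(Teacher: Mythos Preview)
Your proof is correct, and it takes a genuinely different route from the paper. For the membership claim, both you and the paper observe it by direct inspection. For the independence claim, however, the paper does not analyze the coding matrix at all: it reinterprets the transmissions (after dropping $\mathcal{A}$) as a subset of the delivery transmissions in the scheme of \cite{Yu:16} for a reduced caching system with file set $\{\hat{W}_n:n\in\mathscr{P}\}$ and user set $\cup_{n\in\mathscr{P}}I^{[n]}$, and then argues by contradiction that any linear dependence could be exploited to lower the delivery rate, violating the optimality of that scheme established in \cite{Yu:16}. Your argument is instead a direct, self-contained induction on the leader count $|\mathcal{B}'\cap\mathcal{L}|$: you isolate one coefficient identity per symbol $W_{m,\mathcal{A}\cup\mathcal{D}}$, use the fact that $I^{[m]}\cap\mathcal{L}=\{\ell^{[m]}\}$ to collapse the base case to a single term, and in the inductive step observe that replacing $\ell^{[m]}$ by any other $k\in I^{[m]}$ drops the leader count by exactly one. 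The paper's shortcut is shorter on the page but imports a substantial external optimality result; your approach is elementary and keeps the lemma logically independent of \cite{Yu:16}, at the cost of a bit more bookkeeping.
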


\begin{proof}
The first statement is through direct inspection. We can prove the second statement by analyzing the rank of the corresponding coding matrix, which is however rather long and tedious. We instead prove it through a shortcut, directly utilizing the optimality result established in \cite{Yu:16}. 

Fix a demand vector $\vec{d}\in{\mathcal{D}}$, a transmission type $\vec{t}$, and a partition $\mathscr{P}\in \mathcal{P}_{\vec{t},\vec{d}}$. We only need to prove that for a fixed $\mathcal{A}$, the transmissions 
\begin{align}
\oplus_{n\in\mathscr{P}}\left(\oplus_{k\in \mathcal{B}\cap I^{[n]}} W_{n,(\mathcal{B}\setminus k) \cup \mathcal{A}}\right), \label{eqn:removeA}
\end{align}
when $\mathcal{B}$ ranges over all subsets of $\cup_{n\in\mathscr{P}}I^{[n]}$ that satisfy the condition
\begin{align}
\mathcal{B}\subseteq \cup_{n\in \mathscr{P}}I^{[n]}: \mathcal{B}\cap I^{[n]}=t_n, \forall n\in \mathscr{P}\label{eqn:rangeB}
\end{align}
are indeed linearly independent. For this purpose, the exact choice of $\mathcal{A}$ is not relevant, and thus we might as well simply drop it by defining
\begin{align}
\hat{W}_{n,\mathcal{B}\setminus k }\triangleq W_{n,(\mathcal{B}\setminus k) \cup \mathcal{A}},
\end{align}
which lead to the representation
\begin{align}
\oplus_{n\in\mathscr{P}}\left(\oplus_{k\in \mathcal{B}\cap I^{[n]}} \hat{W}_{n,\mathcal{B}\setminus k }\right), \label{eqn:Aremoved}
\end{align}
where $\mathcal{B}$ has the same range as (\ref{eqn:rangeB}). Now consider a caching system that has files $\{\hat{W}_n:n\in \mathscr{P}\}$, the users $\cup_{n\in \mathscr{P}} I^{[n]}$, and the demand vector formed by taking the demand vector $\vec{d}$ at the coordinates $\cup_{n\in \mathscr{P}} I^{[n]}$. The transmissions (\ref{eqn:Aremoved}) are in fact part of the transmissions in the scheme in \cite{Yu:16} for this system when choosing $t=|\mathscr{P}|-1$. These transmissions cannot possibly be linearly dependent, because if so, the dependence could have been removed to further improve the delivery transmission rate, but it was shown in \cite{Yu:16} that this transmission scheme is in fact optimal for each demand vector. The proof is thus complete.
\end{proof}

\subsection{The Correctness of the Coding Scheme}

The next proposition shows that the code is indeed valid for any $\vec{d}\in{\mathcal{D}}$. 
\begin{prop}
Each user can use the delivery transmissions in Algorithm \ref{alg:trans} and the cached content to recover the requested file for any $\vec{d}\in{\mathcal{D}}$. 
\end{prop}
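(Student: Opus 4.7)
The plan is to decompose user $k$'s decoding task into two stages. Stage~1 \emph{resolves the cache}: user $k$ combines their $P_o-P$ cached parity symbols with a specific set of delivery transmissions to obtain, in uncoded form, the $P$ file segments allocated to $k$. Stage~2 then \emph{peels off} the segments of $W_{d_k}$ that were never allocated to $k$ (those $W^{(i)}_{d_k,\mathcal{S}}$ with $k\notin\mathcal{S}$). The key book-keeping notion is that of an \emph{info transmission} for user $k$: a delivery transmission, all of whose summands are segments $W^{(i)}_{n,\mathcal{S}}$ with $k\in\mathcal{S}$. Among the line-22 transmissions these are precisely those with $k\in\mathcal{B}$ and $d_k\notin\mathscr{P}$, because then every summation index $j'\in\mathcal{B}\cap I^{[n]}$ with $n\in\mathscr{P}$ differs from $k$, placing $k$ into every $\mathcal{B}\setminus j'$; among the uncoded transmissions under $\breve{\vec{\mathcal{P}}}^{(t)}_{\vec{d}}$ the info transmissions are simply the sent symbols $W^{(i)}_{n,\mathcal{S}}$ with $k\in\mathcal{S}$.

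For Stage~1, I would first count the info transmissions and then invoke the earlier lemmas for linear independence. A direct matching of combinatorial factors shows that one instance using the pattern $\vec{\mathcal{P}}^{(t)}_{\vec{d},j}$ contributes exactly $\Delta M_{\vec{d},\vec{\mathcal{P}}^{(t)}_{\vec{d},j},k}$ info transmissions (the factor $\binom{m_{d_k}-1}{t_{d_k}-1}$ in (\ref{eqn:DeltaM}) accounts for placing $k$ into $\mathcal{B}\cap I^{[d_k]}$, the leader bracket implements the condition on line~20, and the remaining product handles $\bar{\mathscr{P}}\setminus\{d_k\}$), so across all $r$ instances the total is $N_{\text{info}}=\sum_j r_{\vec{d},\vec{\mathcal{P}}^{(t)}_{\vec{d},j}}\Delta M_{\vec{d},\vec{\mathcal{P}}^{(t)}_{\vec{d},j},k}$. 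From $P_o-P=rM'_r\binom{K}{t}\ge \sum_j r_{\vec{d},\vec{\mathcal{P}}^{(t)}_{\vec{d},j}}M_{\vec{d},\vec{\mathcal{P}}^{(t)}_{\vec{d},j},k}$ together with the identity $M_{\vec{d},\vec{\mathcal{P}}^{(t)}_{\vec{d},j},k}+\Delta M_{\vec{d},\vec{\mathcal{P}}^{(t)}_{\vec{d},j},k}=N\binom{K-1}{t-1}$, the key bound $(P_o-P)+N_{\text{info}}\ge P$ follows. The $P_o-P$ cached parities are standard basis vectors in the ``parity subspace'' of the systematic rank metric code while the info transmissions live entirely in the complementary $P$-dimensional information subspace; Lemmas \ref{lemma:separation} and \ref{lemma:independenceLong} (combined with trivial disjointness across distinct instance indices $i$) then force the info transmissions to be $\mathbb{F}_2$-linearly independent. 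Hence the union has rank at least $P$, and Lemma \ref{lemma:fullrank} applied to any $P$ linearly independent combinations identifies the underlying linearized polynomial and thereby all $P$ of $k$'s allocated segments.

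For Stage~2, fix any $\mathcal{S}$ of size $t$ with $k\notin\mathcal{S}$ and any instance $i$. If $i$ uses the uncoded pattern, then $W^{(i)}_{d_k,\mathcal{S}}$ is explicitly sent by Algorithm~\ref{alg:trans}: on line~13 when $K-t\ge N^*$, since $d_k\in\mathbb{supp}(\vec{m})\subseteq\mathcal{A}$; and on line~6 when $K-t\le N^*-1$, since $d_k\in\mathcal{A}^*$ would force $I^{[d_k]}\subseteq\mathcal{S}$ and hence $k\in\mathcal{S}$, a contradiction. If $i$ uses a coded pattern $j\ge 2$, set $\mathcal{B}=\mathcal{S}\cup\{k\}$, $\vec{t}=\mathcal{T}(\mathcal{B})$, and let $\mathscr{P}\in\mathcal{P}_{\vec{t},\vec{d},j}$ be the block containing $d_k$ (which exists since $t_{d_k}\ge 1$). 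The hypothetical transmission $T(\mathcal{B},\mathscr{P})=\oplus_{n\in\mathscr{P}}\bigl(\oplus_{j'\in\mathcal{B}\cap I^{[n]}}W^{(i)}_{n,\mathcal{B}\setminus j'}\bigr)$ has $W^{(i)}_{d_k,\mathcal{S}}$ as its unique summand outside $k$'s allocation. If $\mathcal{B}$ meets $\bigcup_{n\in\mathscr{P}}\{\ell^{[n]}\}$, then $T(\mathcal{B},\mathscr{P})$ is actually on line~22 and $k$ simply subtracts the known summands; otherwise Lemma \ref{lemma:redundancyreduction} (with $\mathcal{Q}_n=\mathcal{B}\cap I^{[n]}$ for $n\in\mathscr{P}$) expresses $T(\mathcal{B},\mathscr{P})$ as an XOR of other transmissions, each of which now contains some leader and is therefore present on line~22, after which $k$ again subtracts the known summands.

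The main obstacle is the linear-independence step in Stage~1: the info transmissions span many different $(\vec{t},\mathscr{P},\mathcal{A})$ groups and mix with uncoded transmissions across multiple instances, so showing they are jointly $\mathbb{F}_2$-independent \emph{and} independent modulo the parity subspace requires a careful stitching of Lemmas \ref{lemma:separation}, \ref{lemma:independenceLong}, and the systematic-form property of the rank metric code (which guarantees the parity and information subspaces are complementary in the $P_o$-dimensional evaluation space). Once that is in hand, Stage~2 becomes a routine application of Lemma \ref{lemma:redundancyreduction} combined with subtraction of the already-decoded allocated segments.
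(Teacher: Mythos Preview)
Your proposal is correct and follows essentially the same two-stage approach as the paper's own proof: first collect the ``info transmissions'' (those with $k\in\mathcal{B}$ and $d_k\notin\mathscr{P}$, plus the relevant uncoded symbols), count them as $\sum_j r_{\vec{d},\vec{\mathcal{P}}^{(t)}_{\vec{d},j}}\Delta M_{\vec{d},\vec{\mathcal{P}}^{(t)}_{\vec{d},j},k}$, invoke Lemmas~\ref{lemma:separation} and~\ref{lemma:independenceLong} for linear independence and Lemma~\ref{lemma:fullrank} to resolve the cache, and then use Lemma~\ref{lemma:redundancyreduction} to recover the remaining segments of $W_{d_k}$. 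The only cosmetic difference is in Stage~2: the paper reconstructs the full Maddah-Ali--Niesen transmission $\oplus_{k'\in\mathcal{B}}W_{d_{k'},\mathcal{B}\setminus k'}$ and appeals to the decoding of~\cite{MaddahAliNiesen:14}, whereas you reconstruct only the single decomposed block $T(\mathcal{B},\mathscr{P})$ containing $d_k$ and subtract directly---both are valid and equivalent in effect.
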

\begin{proof}
Consider an arbitrary user $k_o$, whose demands is $d_{k_o}$. From the transmissions in line-1 to line-16, the user can clearly collect for each $\mathcal{S}$, where $k_o\in \mathcal{S}$, a total of $r_{\vec{d},\breve{\vec{\mathcal{P}}}^{(t)}_{\vec{d}}} \min(K-r,\tilde{N})$ uncoded symbols, in the form of $W^{i}_{n,\mathcal{S}}$ for $n\in\mathbb{supp}(\vec{m})\setminus\mathcal{A}^*$ (or $n\in\mathcal{A}$), and there are clearly ${K-1 \choose t-1}$ possible ways to choose such a $\mathcal{S}$. Thus a total of $r_{\vec{d},\breve{\vec{\mathcal{P}}}^{(t)}_{\vec{d}}} \Delta M_{\vec{d},\breve{\vec{\mathcal{P}}}^{(t)}_{\vec{d}}, k}$ symbols are collected. 

Then consider the transmissions on line-22 in the algorithm. For each transmission type $\vec{t}$ such that $d_{k_o}\in \mathbb{supp}(\vec{t})$, consider every $\mathcal{B}$ such that $\mathcal{T}(\mathcal{B})=\vec{t}$, $|\mathcal{B}|=t+1$, $k_o\in \mathcal{B}$, and $\left(\bigcup_{n\in \mathscr{P}} \{\ell^{[n]}\}\right)\cap \mathcal{B}\neq \emptyset$. User-$k$ collects the following transmissions in the corresponding instances:
\begin{align}
\oplus_{n\in\mathscr{P}}\left(\oplus_{k\in \mathcal{B}\cap I^{[n]}} W^{(i)}_{n,\mathcal{B}\setminus k}\right), \mathscr{P} \mbox{ such that } \mathscr{P}\in {\mathcal{P}}_{\vec{t},\vec{d},j} \mbox{ and } d_{k_o}\notin \mathscr{P};\label{eqn:part1}
\end{align}
First note that since $d_{k_o}\notin \mathscr{P}$ but $n\in \mathscr{P}$, and $k\in I^{[n]}$, we have $k_o\neq k$ in the inner enumeration. Thus $k_o\in \mathcal{B}\setminus k$. This implies that all the collected transmissions are linear combinations of the  symbols of the form $W^{(i)}_{n,\mathcal{S}}$ where $k_o\in \mathcal{S}$, which are the components of the linear combinations stored in the cache of user-$k_o$. 
It is straightforward to count that user-$k$ collects a total of $\Delta M_{\vec{d},{\vec{\mathcal{P}}}^{(t)}_{\vec{d},j}, k_o}$ for the decomposition pattern ${\vec{\mathcal{P}}}^{(t)}_{\vec{d},j}$
such transmissions in (\ref{eqn:part1}) for each fixed $i$ value. Together with the cached contents, user-$k_o$ has a total of 
\begin{align}
M'+\sum_{j=1}^q r_{\vec{d},{\vec{\mathcal{P}}}^{(t)}_{\vec{d},j}}\Delta M_{\vec{d},{\vec{\mathcal{P}}}^{(t)}_{\vec{d},j}, k_o}\geq rN{K-1 \choose t-1},
\end{align}
linear combinations of the $rN{K-1 \choose t-1}$ symbols $W^{(i)}_{n,\mathcal{S}}$ where $k_o\in \mathcal{S}$. The collected linear combinations from the delivery transmissions are clearly linearly independent due to Lemma \ref{lemma:separation} and Lemma \ref{lemma:independenceLong}. Thus these linear combinations that user-$k_o$  has gathered can be viewed as a full rank transformation of the corresponding rank metric code symbols, which were produced at the prefetching stage by user-$k_o$. By Lemma \ref{lemma:fullrank}, user $k_o$ can recover all these symbols to their uncoded form. At this point, user-$k_o$ essentially has all the symbols as if the uncoded prefetching strategy in \cite{MaddahAliNiesen:14} was used. 
It remains to argue that user-$k_o$ can also recover the file symbols of file  $d_{k_o}$, which is in the form $W^{(i)}_{d_{k_o},\mathcal{S}}$ such that $k_o\notin \mathcal{S}$. This is straightforward to check for $i=1,\ldots,r_{\vec{d},\breve{\vec{\mathcal{P}}}^{(t)}_{\vec{d}}}$, because all the needed symbols are transmitted in the uncoded form. For the other cases, since by Lemma \ref{lemma:redundancyreduction}, the original transmissions in the delivery scheme in \cite{MaddahAliNiesen:14} can be completely reconstructed using the transmissions in Algorithm \ref{alg:trans}, and with these transmissions and the uncoded prefetched symbols in the cache, user-$k_o$ can indeed recover the missing file symbols through the decoding strategy in \cite{MaddahAliNiesen:14}. The proof is thus complete.
\end{proof}

\section{Conclusion}
\label{sec:conclusion}
We discovered a connection between the caching strategy in \cite{TianChen:16Arxiv} and that in \cite{Yu:16}, that is decomposing the delivery transmissions in \cite{Yu:16} yields those in \cite{TianChen:16Arxiv} in some cases. This allows us to view the coding strategy in \cite{TianChen:16Arxiv} and that in \cite{Yu:16} as the two extremes of a more general scheme. The general scheme can achieve some memory-rate pairs previously unknown in the literature, and can be computed using a linear programming approach. 

We note that although the new scheme unifies the codes in \cite{TianChen:16Arxiv} and \cite{Yu:16}, it does not appear to include the codes in \cite{gomez2016fundamental}. We suspect that an improved code can be found by  analyzing the transmission types more carefully to optimize explicitly the decomposition patterns, and then incorporate certain coding components in \cite{gomez2016fundamental}; this is part of our ongoing work. Our work reported here is information theoretic in nature, and little attention has been paid to the complexity of the codes. Particularly, the proposed code has a large alphabet size, a large subpacketization factor, and needs to code across a large number of instances. Such a code is challenging to use directly in practical systems, and effort toward simplifying it appears worthwhile. 

\section*{Appendix: Proof of Corollary \ref{corollary:specialcase}}
We first show that (\ref{eqn:Yu}), {\em i.e.}, the memory-rate tradeoff points given in \cite{Yu:16}, can be obtained by specializing (\ref{eqn:Rs}), (\ref{eqn:Ms}), and (\ref{eqn:LP1})-(\ref{eqn:LP5}). For this case, the decomposition patterns are given by $\mathcal{P}_{\vec{t},\vec{d}}=\{\mathbb{supp}(\vec{t})\}$, {\em i.e.,} there is no decomposition. As such, (\ref{eqn:Rs}) reduces to
\begin{equation}
R_{\vec{d},\vec{\mathcal{P}}^{(t)}_{\vec{d}}} = \sum_{\vec{t} \in \mathcal{T}^{(t)}_{\vec{d}}}  \prod_{n \in \mathbb{supp}(\vec{t})} \binom{m_n}{t_n} - \sum_{\vec{t} \in \mathcal{T}^{(t)}_{\vec{d}}}
\prod_{n \in \mathbb{supp}(\vec{t})} \binom{m_n-1}{t_n}.\label{eqn:specialYu}
\end{equation}
The first summation is clearly ${K \choose t+1}$, since it is simply the number of ways to choose $(t+1)$ users from the $K$ users, however counted one transmission type at a time. To simplify the second term in (\ref{eqn:specialYu}), let us consider any demand vector $\vec{d}\in\mathcal{D}$. For any transmission type $\vec{t}$ where there exists $n\in \mathbb{supp}(\vec{t})$ such that $t_n=m_n$, the product $\prod_{n \in \mathbb{supp}(\vec{t})} \binom{m_n-1}{t_n}$ is clearly zero. The second summation can thus be viewed as counting the number of ways to choose $(t+1)$ users, however, with the leaders $\{\ell^{[n]},m_n\neq 0\}$ not being chosen; there is clearly ${K-N^* \choose t+1}$ ways to do so. This implies that
\begin{align}
R_{\vec{d},\vec{\mathcal{P}}^{(t)}_{\vec{d}}} = {K \choose t+1}-{K-N^* \choose t+1}\leq {K \choose t+1}-{K-\tilde{N} \choose t+1}, \quad \forall \vec{d}\in \mathcal{D}.\label{eqn:Ryu}
\end{align}
Similarly (\ref{eqn:Ms}) can be simplified for any $\vec{d}\in\mathcal{D}$ as
\begin{align}
M_{\vec{d}, \vec{\mathcal{P}}^{(t)}_{\vec{d}}, k} = N \binom{K-1}{t-1},\label{eqn:Myu}
\end{align}
since the other term disappears with the choice $\mathcal{P}_{\vec{t},\vec{d}}=\{\mathbb{supp}(\vec{t})\}$. 
The quantities (\ref{eqn:Ryu}) and (\ref{eqn:Myu}) are independent of $\vec{d}$. It is clear that (\ref{eqn:Yu}) is identical to (\ref{eqn:Ryu}) and (\ref{eqn:Myu}) after normalization with the file size $F={K\choose t}$. It is easy to verify that they are indeed in the region $\mathcal{R}^{(t)}$, thus it is clearly inside $\mathbf{cl}\left(\cup_{t=0,\ldots,K}\mathcal{R}^{(t)}\right)$. 

Next we show that (\ref{eqn:TianChen}),  {\em i.e.}, the memory-rate tradeoff points given in \cite{TianChen:16Arxiv}, can also be obtained by specializing (\ref{eqn:Rs}), (\ref{eqn:Ms}), and (\ref{eqn:LP1})-(\ref{eqn:LP5}). In this case, the decomposition patterns are given by $\mathcal{P}_{\vec{t},\vec{d}}=\{\{n\}:n\in\mathbb{supp}(\vec{t})\}$, {\em i.e.,} $\mathbb{supp}(\vec{t})$ is partitioned into sets, where each set is a singleton. It follows that
\begin{align}
R_{\vec{d},\vec{\mathcal{P}}^{(t)}_{\vec{d}}}&=\sum_{\vec{t} \in \mathcal{T}^{(t)}_{\vec{d}}} \sum_{n\in \mathbb{supp}(\vec{t})} \left[\Bigg( \binom{m_n}{t_n} - \binom{m_n-1}{t_n} \Bigg)\cdot\prod_{n'\in \mathbb{supp}(\vec{t})\setminus \{n\}}{m_{n'} \choose t_{n'}}\right]\nonumber\\
&=\sum_{\vec{t} \in \mathcal{T}^{(t)}_{\vec{d}}} \sum_{n\in \mathbb{supp}(\vec{t})} \left[\binom{m_n-1}{t_n-1}\cdot\prod_{n'\in \mathbb{supp}(\vec{t})\setminus \{n\}}{m_{n'} \choose t_{n'}}\right].
\end{align}
Define $\mathbb{1}_{c}$ to be the indicator function which is equal to $1$ when the condition $c$ holds, and is equal to $0$ otherwise. We can now rewrite the summation as
\begin{align}
R_{\vec{d},\vec{\mathcal{P}}^{(t)}_{\vec{d}}}&=\sum_{\vec{t} \in \mathcal{T}^{(t)}_{\vec{d}}} \sum_{n\in\mathbb{supp}(\vec{m})} \mathbb{1}_{n\in \mathbb{supp}(\vec{t})}\left[\binom{m_n-1}{t_n-1}\cdot\prod_{n'\in \mathbb{supp}(\vec{t})\setminus \{n\}}{m_{n'} \choose t_{n'}}\right]\nonumber\\
&=\sum_{n\in\mathbb{supp}(\vec{m})}\sum_{\vec{t} \in   \mathcal{T}^{(t)}_{\vec{d}}} \mathbb{1}_{n\in \mathbb{supp}(\vec{t})}\left[\binom{m_n-1}{t_n-1}\cdot\prod_{n'\in \mathbb{supp}(\vec{t})\setminus \{n\}}{m_{n'} \choose t_{n'}}\right]. 
\end{align}
Notice that the equality
\begin{align}
\sum_{\vec{t} \in   \mathcal{T}^{(t)}_{\vec{d}}} \mathbb{1}_{n\in \mathbb{supp}(\vec{t})}\left[\binom{m_n-1}{t_n-1}\cdot\prod_{n'\in \mathbb{supp}(\vec{t})\setminus \{n\}}{m_{n'} \choose t_{n'}}\right]={K-1 \choose t},
\end{align}
since the left hand side is the number of ways to choose $t+1$ users among the $K$ users, with $\ell^{[n]}$ already chosen, counted one transmission type at a time. It follows that for any $\vec{d}\in \mathcal{D}$, 
\begin{align}
R_{\vec{d},\vec{\mathcal{P}}^{(t)}_{\vec{d}}}=N^*{K-1 \choose t}. \label{eqn:TianRs}
\end{align}
Let us turn to (\ref{eqn:Ms}), the second term of which in this case can be simplified as
\begin{align}
&\Delta M_{\vec{d}, \vec{\mathcal{P}}^{(t)}_{\vec{d}}, k} = \sum_{\vec{t} \in \mathcal{T}^{(t)}_{\vec{d}}: t_{d_k} >0} \sum_{n \in \mathbb{supp}(\vec{t})\setminus\{d_k\}} \left(\binom{m_{d_k}-1}{t_{d_k}-1} \binom{m_n-1}{t_n-1}\cdot \prod_{n' \in \mathbb{supp}(\vec{t})\setminus\{n,d_k\}} \binom{m_{n'}}{t_{n'}} \right)\nonumber\\
&=\sum_{\vec{t} \in \mathcal{T}^{(t)}_{\vec{d}}: t_{d_k} >0} \sum_{n \in \mathbb{supp}(\vec{m})\setminus\{d_k\}} \mathbb{1}_{n\in \mathbb{supp}(\vec{t})}
\left[\binom{m_{d_k}-1}{t_{d_k}-1}
 \binom{m_n-1}{t_n-1}\cdot \prod_{n' \in \mathbb{supp}(\vec{t})\setminus\{n,d_k\}} \binom{m_{n'}}{t_{n'}}\right]\nonumber\\
 &=\sum_{n \in \mathbb{supp}(\vec{m})\setminus \{d_k\}} \sum_{\vec{t} \in \mathcal{T}^{(t)}_{\vec{d}}} \mathbb{1}_{\{n,d_k\}\subseteq \mathbb{supp}(\vec{t})}
\left[
 \binom{m_{d_k}-1}{t_{d_k}-1} \cdot   \binom{m_n-1}{t_n-1}\cdot \prod_{n' \in \mathbb{supp}(\vec{t})\setminus\{n,d_k\}} \binom{m_{n'}}{t_{n'}} \right]\nonumber\\
 &=(N^*-1){K-2 \choose t-1},
\end{align}
where the last equality is because for each fixed $n\in \mathbb{supp}(\vec{m})\setminus\{d_k\}$, the inner summation is simply the number of ways to choose $t+1$ users in the $K$ users, with $\ell^{[d_k]}$ and $\ell^{[n]}$ already chosen. Thus we arrive at
\begin{align}
M_{\vec{d}, \vec{\mathcal{P}}^{(t)}_{\vec{d}}, k}=N \binom{K-1}{t-1}-(N^*-1){K-2 \choose t-1}. \label{eqn:MTian}
\end{align}
Note that neither (\ref{eqn:TianRs}) nor (\ref{eqn:MTian}) depends on $\vec{d}$ or $k$.

For $N^*=\tilde{N}$, normalizing both of them by ${K \choose t}$ already gives exactly the memory-rate tradeoff pairs in (\ref{eqn:TianChen}). This leaves us only the case when $N^*\neq \tilde{N}$ to consider.  We shall use two decomposition patterns for this case. Define
\begin{align}
\alpha_{\breve{\vec{\mathcal{P}}}_{\vec{d}}^{(t)}}=\left\{
\begin{array}{ll}
\frac{\tilde{N}-N^*}{K-N^*}& K-t\leq \tilde{N}\\
\frac{(K-t)(\tilde{N}-N^*)}{K(\tilde{N}-N^*)+tN^*}&\text{otherwise}
\end{array}
\right.
\end{align}
which is clearly non-negative and is associated with the uncoded transmission pattern, and $1-\alpha_{\breve{\vec{\mathcal{P}}}_{\vec{d}}^{(t)}}$ which is  also non-negative and is associated with the transmission pattern whose rate and memory are given in (\ref{eqn:TianRs}) and (\ref{eqn:MTian}). It is easy to check that when $K-t\leq \tilde{N}$, 
\begin{align}
&\alpha_{\breve{\vec{\mathcal{P}}}_{\vec{d}}^{(t)}}M_{\vec{d}, \breve{\vec{\mathcal{P}}}^{(t)}_{\vec{d}}, k}+(1-\alpha_{\breve{\vec{\mathcal{P}}}_{\vec{d}}^{(t)}})M_{\vec{d}, \vec{\mathcal{P}}^{(t)}_{\vec{d}}, k}=N{K-1 \choose t-1}-(\tilde{N}-1){K-2 \choose t-1}.
\end{align}
and 
\begin{align}
\alpha_{\breve{\vec{\mathcal{P}}}_{\vec{d}}^{(t)}}R_{\vec{d}, \breve{\vec{\mathcal{P}}}^{(t)}_{\vec{d}}, k}+(1-\alpha_{\breve{\vec{\mathcal{P}}}_{\vec{d}}^{(t)}})R_{\vec{d}, \vec{\mathcal{P}}^{(t)}_{\vec{d}}, k}=\tilde{N}{K-1 \choose t}. \label{eqn:Requal}
\end{align}
On the other hand, when $K-t> \tilde{N}$, (\ref{eqn:Requal}) still holds, but 
\begin{align}
\alpha_{\breve{\vec{\mathcal{P}}}_{\vec{d}}^{(t)}}M_{\vec{d}, \breve{\vec{\mathcal{P}}}^{(t)}_{\vec{d}}, k}+(1-\alpha_{\breve{\vec{\mathcal{P}}}_{\vec{d}}^{(t)}})M_{\vec{d}, \vec{\mathcal{P}}^{(t)}_{\vec{d}}, k}&=N{K-1 \choose t-1}-{K-2 \choose t-1}\left(\tilde{N}-\frac{\tilde{N}(\tilde{N}-N^*)+t\tilde{N}}{K(\tilde{N}-N^*)+tN^*}\right)\nonumber\\
&< N{K-1 \choose t-1}-{K-2 \choose t-1}(\tilde{N}-1),
\end{align}
where the last inequality is because
\begin{align}
\tilde{N}(\tilde{N}-N^*)+t\tilde{N}< (K-t)(\tilde{N}-N^*)+t\tilde{N}=K(\tilde{N}-N^*)+tN^*
\end{align}
by the condition $K-t> \tilde{N}$. Thus for $N^*\neq \tilde{N}$, we have found the correct $\alpha_{\breve{\vec{\mathcal{P}}}_{\vec{d}}^{(t)}}$ to satisfy the conditions in (\ref{eqn:LP1})-(\ref{eqn:LP5}), and indeed the memory-rate pair (\ref{eqn:TianChen}) is in the region $\mathcal{R}^{(t)}$. The proof is thus complete.

\bibliographystyle{IEEEtran}
 \newcommand{\noop}[1]{}

\end{document}